\documentclass[a4paper,11pt]{article}
\usepackage{amsmath,amssymb,amsthm,fullpage}

\sloppy

\newtheorem{thm}{Theorem}
\newtheorem{lem}[thm]{Lemma}
\newtheorem{cor}[thm]{Corollary}

\theoremstyle{definition}
\newtheorem{defn}[thm]{Definition}
\newtheorem{exa}[thm]{Example}

\theoremstyle{remark}
\newtheorem{rem}[thm]{Remark}

\newcommand{\mcC}{\mathcal C}
\newcommand{\mcD}{\mathcal D}
\newcommand{\mcI}{\mathcal I}
\newcommand{\mcJ}{\mathcal J}
\newcommand{\mcK}{\mathcal K}
\newcommand{\mcN}{\mathcal N}

\newcommand{\N}{\mathbb N}
\newcommand{\R}{\mathbb R}
\newcommand{\GF}{\operatorname{GF}}
\newcommand{\supp}{\operatorname{supp}}
\newcommand{\rank}{\operatorname{rank}}

\title{{\bf New Results on the Pseudoredundancy}\thanks{%
   Part of the paper has been presented at the ISIT, 2014, Hawaii, U.S..
    This work was supported by The National Science Foundation of China
    (No.~11171366).}}

\author{%
  {Zihui Liu}\thanks{%
  Corresponding author: lzhui@bit.edu.cn}\\
  {\normalsize Department of Mathematics,}\\
  {\normalsize Beijing Institute of Technology,
    Beijing 100081, China}\medskip\\
  {Jens Zumbr\"agel}\\
  {\normalsize Institute of Algebra,}\\
  {\normalsize Dresden University of Technology,
    01062 Dresden, Germany}\medskip\\
  {Marcus Greferath}\\
  {\normalsize Claude Shannon Institute,}\\
  {\normalsize University College Dublin, Belfield,
    Dublin 4, Ireland}\medskip\\
  {Xin-Wen  Wu}\\
  {\normalsize School of Information and Communication Technology,}\\
  {\normalsize Griffith University, Gold Coast, QLD 4222, Australia}}

\date{}

\begin{document}

\maketitle

%---------------------------------------

\begin{abstract}\noindent
  The concepts of pseudocodeword and pseudoweight play a fundamental
  role in the finite-length analysis of LDPC codes.  The
  pseudoredundancy of a binary linear code is defined as the minimum
  number of rows in a parity-check matrix such that the corresponding
  minimum pseudoweight equals its minimum Hamming distance.  By using
  the value assignment of Chen and Kl\o ve we present new results on
  the pseudocodeword redundancy of binary linear codes.  In
  particular, we give several upper bounds on the pseudoredundancies of
  certain codes with repeated and added coordinates and of certain
  shortened subcodes.  We also investigate several kinds of
  $k$-dimensional binary codes and compute their exact pseudocodeword
  redundancy.  \medskip

  \noindent\textbf{Key words.}
  LDPC codes; fundamental cone; pseudoweight; pseudocodeword
  redundancy; subcode-complete; value assignment\medskip

%  \noindent\textbf{MSC.} 94B05
\end{abstract}

%---------------------------------------

\section{Introduction}

The concept of a pseudocodeword plays a key role in the finite-length
analysis of binary low-density parity-check (LDPC) codes under linear
programming (LP) decoding (or, to some extent, under message-passing
iterative decoding), see~\cite{fwk,vk}.  The effect of pseudocodewords
on the decoding behavior is measured by their pseudoweight~\cite{fkkr,vk},
which depends on the channel at hand. Accordingly, the pseudocodeword
redundancy (or pseudoredundancy) of a binary linear code is of
interest, which is defined as the minimum number of rows in a
parity-check matrix such that the corresponding minimum pseudoweight
is as large as its minimum Hamming distance.  The pseudoredundancy for
various channels has been studied, e.g., in~\cite{sv}, \cite{ks},
and~\cite{zsf}.

If a code has infinite pseudocodeword redundancy, then LP decoding for
this code can never achieve the maximum-likelihood decoding
performance; on the other hand, if a code's pseudocodeword redundancy
is finite, its value provides an indication of the required LP
decoding complexity.  Note that this is a fundamental complexity
associated with the code, and not tied to a particular parity-check
matrix.

It is undoubtedly meaningful to determine either the pseudocodeword
redundancy or to give bounds on the pseudocodeword redundancy of a
binary linear code.  However, it was shown in
\cite[Th.~3.2,~Th.~3.5]{zsf} that most codes have infinite AWGNC and
BSC pseudoredundancy.  In contrast to this result, we will determine
the pseudoredundancies of some kinds of $k$-dimensional codes and give
bounds for certain constructed codes.  Our main tool to study the
pseudoredundancy is the value assignment introduced by Chen and Kl\o
ve~\cite{ck}.

The rest of the paper is organized as follows.  In Section~\ref{sec:prelim} we
define pseudoweights for various channels and the notion of
pseudoredundancy; we also present the concept of value assignment.
Section~\ref{sec:repeat} contains a discussion of codes based on
repeating or adding coordinates and of shortened subcodes.  In
Section~\ref{sec:k-dim} we determine the pseudoredundancies of certain
$k$-dimensional codes based on value assignment, generalising previous
results significantly.  Finally, we conclude in
Section~\ref{sec:concl}.

%---------------------------------------

\section{Preliminaries}\label{sec:prelim}

For a binary linear code $\mcC$ of length $n$, when analyzing LP
decoding for a binary-input output-symmetric channel, one may assume
that the zero codeword~$\mathbf{0}$ has been sent; then, the
probability of correct LP decoding depends on the conic hull of the
fundamental polytope, called the {\em fundamental cone}~\cite{fwk,vk},
which depends on the given parity-check matrix of~$\mcC$.

Let $H$ be an $m\times n$ parity-check matrix for $\mcC$, where the
$m$ rows may be linearly dependent.  Let $\mcI = \{1,\dots,n\}$ and
$\mcJ = \{1,\dots,m\}$ be the set of column and row indices,
respectively, and for each $j\in\mcJ$ let $\mcI_j = \{i\in\mcI \mid
H_{j,i}\ne 0\}$.  Then, the fundamental cone $\mcK(H)$ with respect to
the parity-check matrix $H$ of $\mcC$ is given as the set of vectors
$x\in\R^n$ that satisfy
\begin{equation}\label{eq:cone}
  \begin{split}
  \forall j\in\mcJ\ \forall \ell\in\mcI_j &:\
  x_{\ell}\le \textstyle\sum_{i\in\mcI_j\backslash\{\ell\}}x_i \,, \\
  \forall i\in\mcI &:\ x_i\ge 0 \,.
  \end{split}
\end{equation}
The vectors $x\in\mcK(H)$ are called {\em pseudocodewords} of~$\mcC$
with respect to the parity-check matrix~$H$.

The influence of a nonzero pseudocodeword on the decoding performance
is measured by its {\em pseudoweight}, which depends on the underlying
channel.  The BEC (binary erasure channel), AWGNC, BSC pseudoweights
and max-fractional weight of a nonzero pseudocodeword $x\in\mcK(H)$
are defined as follows~\cite{fkkr,vk}:
\begin{gather*}
  w_{\text{BEC}}(x) = |\text{supp}(x)| \,,\\
  w_{\text{AWGNC}}(x) = \frac{(\textstyle\sum_{i\in\mcI}x_i)^2}
  {\textstyle\sum_{i\in\mcI}{x_i}^2} \,,
\end{gather*}
letting $x'$ be a vector in $\R^n$ with the same components as $x$ but
in nonincreasing order,  for $i-1<\xi\leq i$, where $1\leq i\leq n$,
letting $\phi(\xi)=x'_i$ and defining
$\Phi(\xi)=\int_{0}^{\xi}\phi(\xi')d\xi'$,
\begin{gather*}
  w_{\text{BSC}}(x)=2\Phi^{-1}(\Phi(n)/2) \,, \\
  w_{\text{maxfrac}}(x) = \frac{\textstyle\sum_{i\in\mcI}x_i}
  {\textstyle\max_{i\in\mcI}x_i} \,.
\end{gather*}

For binary vectors $x\in\{0,1\}^n\setminus\{\mathbf{0}\}$ one has
\[ w_{\text{BEC}}(x) \,=\, w_{\text{AWGNC}}(x) \,=\,
w_{\text{BSC}}(x) \,=\, w_{\text{maxfrac}}(x) \,=\, w_H(x) \,, \] where
$w_H(x)$ denotes the Hamming weight of $x$.

Define the {\em minimum pseudoweight} of a code $\mcC$ with respect
to a parity-check matrix $H$ as
\[ w_{\min}(H) \,= \! \min_{x\in\mcK(H)\setminus\{\mathbf{0}\}}\! w(x) \,, \]
where $w(x)$ may represent any one of the four pseudoweights (it is
a fact that $w_{\min}(H)$ is indeed attained on
$\mcK(H)\setminus\{\mathbf{0}\}$~\cite{vk}).  The minimum
pseudoweight $w_{\min}(H)$ can be seen as a first-order measure of
decoding error-correcting performance of a code $\mcC$ given by the
parity-check matrix $H$ under LP decoding.  We note that all four
minimum pseudoweights are upper bounded by $d(\mcC)$, the minimum
distance of $\mcC$.

\begin{defn}\label{def:pr}
  The {\em pseudocodeword redundancy}, or briefly the {\em
    pseudoredundancy}, $\rho(\mcC)$, of a binary linear $[n,k,d]$ code
  $\mcC$ is defined as
  \[ \rho(\mcC) \,=\, \inf \{ \#\text{rows}(H) \mid
  H \text{ is a parity-check matrix of }\mcC,w_{\min}(H) = d\} \,, \]
  where $\inf \varnothing$ is defined as $\infty$; here $w_{\min}(H)$ is
  for one of the four pseudoweights, and we use accordingly the term
  BEC, AWGN, BSC, or max-fractional pseudoredundancy.
\end{defn}

It is obvious that $\rho(\mcC)\ge n-k$ for any $[n,k,d]$ code $\mcC$.
Furthermore, for any binary linear code $\mcC$ it
holds~\cite[Th.~2.5]{zsf} that
\begin{equation}\label{eq:inequality}
 \begin{array}{c}
  \rho_{\text{maxfrac}} (\mcC) \;\ge\; \rho_{\text{AWGNC}} (\mcC) \;\ge\;
  \rho_{\text{BEC}} (\mcC) \; , \\
  \rho_{\text{maxfrac}} (\mcC) \;\ge\; \rho_{\text{BSC}} (\mcC) \;\ge\;
  \rho_{\text{BEC}} (\mcC) \; .
 \end{array}
\end{equation}

The value assignment, which was first introduced in~\cite{ck},
is our main tool for investigating the pseudoredundancy.  It is given
as follows.

\begin{defn}
  A {\em value assignment} is a map
  \[ m(\cdot): PG(k\!-\!1,q) \,\to\, \N = \{0,1,2,\dots,\}  \]
  from the $(k\!-\!1)$-dimensional projective space $PG(k\!-\!1,q)$
  over the finite field $\GF(q)$ to $\N$, the set of nonnegative integers.  For
  a point $p\in PG(k\!-\!1,q)$, we call $m(p)$ the {\em value} of $p$.
\end{defn}

%For a given value assignment, each point of the $(k\!-\!1)$-dimensional
%projective space $PG(k\!-\!1,q)$
%is assigned a value which is a nonnegative integer.
For instance, let $G$ be a
$k\times n$ matrix over $GF(q)$ (which may be a generator matrix of a linear $[n,~k]$ code);
the columns of $G$ are viewed as points of $PG(k\!-\!1,q)$. For any $p\in PG(k\!-\!1,q)$, we
define $m(p)$ to be the number of occurrences of $p$ as columns of $G$. Please
note that $G$ may have repeating columns, for which the corresponding point $p$
of $PG(k\!-\!1,q)$ has a value greater than 1. If $p$ does
not appear in $G$, then $m(p)=0$.
%A value assignment can be considered as a multiset of $PG(k\!-\!1,q)$,
Viewing the columns of a generator matrix of a linear $[n, k]$ code
as a multiset of points of the projective space $PG(k\!-\!1,q)$, this multiset defines a
value assignment. %In such a way, a generator matrix (and thus a code)
%determines a value assignment.
Conversely, given a value assignment $m$ (or equivalently, a
sequence of nonnegative integers, $z_1, z_2, ..., z_N$, where
$z_i=m(p_i)$ for $p_i\in PG(k\!-\!1,q)$ and $N=(q^k-1)/(q-1)$ the
number of points of $PG(k\!-\!1,q)$), a $k\times n$ matrix $G$ is
uniquely determined, where $n$ is the number of  points $p$ (each
$p$ is counted $m(p)$ times),
of the projective space $PG(k\!-\!1,q)$.  %with positive values (each point $p$ with $m(p)>0$ is counted $m(p)$ times).
The columns of $G$
is a multiset of points of $PG(k\!-\!1,q)$, that is, the
columns of $G$ consist of the points $p$ with
positive values and each of them repeats $m(p)$ times. Therefore, the value assignment defines a generator matrix $G$ and thus an
$[n, k]$ code (up to code equivalence).
%
%when considering a
%multiset of projective points as columns of a matrix, a multiset,
%i.e., a value assignment, can determine a generator matrix (without
%zero columns) and thus a code (up to equivalence).
% Although the value assignment is dependent on the generator matrix
% of a code, it is useful in coding theory in certain cases.
Let $\mcC$ be the $[n,k]$ code determined by a value
assignment $m(\cdot)$, from the above discussion the following important property holds, $\sum_{p\in PG(k\!-\!1,q)}m(p) = n$.

\vskip 0.2cm

Since equivalent codes lead to equivalent dual codes, equivalent codes
with parity-check matrices have the same minimum pseudoweight.  Thus,
when studying the pseudocodeword redundancy, it suffices to use the
value assignment to construct different equivalent codes.

%---------------------------------------

\section{Codes Based on Repeating and Adding Coordinates
  and Shortened Subcodes}\label{sec:repeat}

In this section, we will give several bounds on the pseudoredundancies
of codes obtained by increasing the number of coordinates and of
certain shortened subcodes by using the value assignment.

We remark that repeating coordinates is a useful method to construct
a code.  For example, any binary linear constant-weight code (that
is, all the nonzero codewords have the same weight) is obtained from
a simplex code by repeating each coordinate equally times~\cite{lc},
or equivalently,  any binary
  constant-weight code consists of copies of the simplex code.
  Furthermore, a recent paper~\cite{lw} shows that
a large class of codes called relative constant-weight codes, which
have applications to secret sharing schemes, can be obtained by
repeating coordinates.

Related to the Tanner graph itself, the operation of repeating
coordinates is also useful.  In~\cite{etv} the authors studied the
effect of repeating coordinates on the Tanner graph.  In particular,
\cite[Lem.~3, Prop.~4]{etv} shows that any linear cycle-free code
with rate $\le 0.5$ can be obtained from a linear cycle-free code
with rate $>0.5$ by repeating coordinates.

Assume that $G$ is a generator matrix of an $[n,k,d]$ code $\mcC$, and the value assignment $m(\cdot)$ is defined
from $G$ (as discussed in Section 2).  Then any codeword
$c\in\mcC$ may be written as $c = u G$ for some $u\in\GF(2)^k$; denote
by $u^{\perp}$ the set of points in $PG(k\!-\!1,2)$ that are
perpendicular to $u$ (according to the usual inner product), and let
\[ T_c = \{ p \in u^{\perp} \mid m(p) \ge 1\}
\quad \text{ and }\quad \overline{T_c} = u^{\perp} \setminus T_c
= \{ p \in u^{\perp} \mid m(p) = 0 \} \,. \]
Note that $T_c$ corresponds to those column indices $i$ of $G$ where
$c_i = 0$.  Then, we have:

\begin{thm}\label{thno3}
  Let $G$, $\mcC$, and $m(\cdot)$ be as above and let $c$ be any
  codeword of $\mcC$ with minimum weight $d$.
  \begin{enumerate}
  \item Define an $[n',k]$ code $\mcC'$ generated by the matrix $G'$
    obtained from $G$ by increasing the values of some of the points $p\in T_c$.
    Then, \[ \rho(\mcC') \;\le\; \rho(\mcC)+(n'-n) \]
    for the max-fractional pseudoweight and for the BEC pseudoweight.
  \item Define an $[n',k]$ code $\mcC'$ generated by the matrix $G'$
    obtained by adding the points of the set $\overline{T_c}$ to the
    columns of $G$.  If each added point $p\in\overline{T_c}$ repeats
    at least $\lceil(1-1/k)d\rceil$ times in the columns of $G'$, then
    \[ \rho(\mcC') \;\le\; \rho(\mcC) + (n'-n) \]
    for the max-fractional pseudoweight and for the BEC pseudoweight.
  \end{enumerate}
\end{thm}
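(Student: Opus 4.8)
The plan is to start from an optimal parity-check matrix $H$ of $\mcC$ realizing $\rho(\mcC)$ (so $w_{\min}(H)=d$ and $H$ has $\rho(\mcC)$ rows; if $\rho(\mcC)=\infty$ both bounds are vacuous), and to build from it a parity-check matrix $H'$ of $\mcC'$ with exactly $\rho(\mcC)+(n'-n)$ rows for which $w_{\min}(H')=d$. Concretely, for each new coordinate I append one row to $H$ encoding how the new column arises: in part~(1) the new column equals an existing column $g_i$ (a point of $T_c$), so the row is $e_{\text{new}}+e_i$; in part~(2), fixing $k$ columns of $G$ that form a basis, every added point $p\in\overline{T_c}$ can be written as $p=\sum_{j\in S_p}g_j$ with $|S_p|\le k$, so the first copy of $p$ gets the row $e_{\text{new}}+\sum_{j\in S_p}e_j$ and each further copy gets a row equating it with the first copy. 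A direct check shows $\ker H'=\mcC'$, and the number of appended rows equals the number of new coordinates, namely $n'-n$; hence $\#\text{rows}(H')=\rho(\mcC)+(n'-n)$. Since every added column lies in $u^{\perp}$, the codeword $c$ acquires only zero coordinates and so still has weight $d$ in $\mcC'$, while lengthening a code never decreases the minimum distance; therefore $d(\mcC')=d$. As $w_{\min}(H')\le d(\mcC')=d$ always holds, it remains to prove $w_{\min}(H')\ge d$.

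For the lower bound I would take any nonzero pseudocodeword $x'=(x,y)\in\mcK(H')$, where $x$ collects the original and $y$ the new coordinates. The rows inherited from $H$ involve only the original coordinates, so their cone inequalities force $x\in\mcK(H)$; moreover the appended rows show $x'=\mathbf{0}$ whenever $x=\mathbf{0}$, so in fact $x\in\mcK(H)\setminus\{\mathbf{0}\}$ and thus $w(x)\ge w_{\min}(H)=d$ for the relevant pseudoweight. For the BEC pseudoweight this already finishes the argument, since $w_{\text{BEC}}(x')=|\supp(x')|\ge|\supp(x)|\ge d$; note that this part needs neither the location of the points in $T_c$ or $\overline{T_c}$ nor the repetition hypothesis.

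The max-fractional case is where the hypotheses are used, and I expect it to be the main obstacle. After normalising $\max_i x'_i=1$, I would split into two cases according to where the maximum is attained. If it is attained on an original coordinate then $\max_i x_i=1$ and $w_{\text{maxfrac}}(x')=\sum_i x'_i\ge\sum_i x_i=w_{\text{maxfrac}}(x)\ge d$, with nothing further needed. The delicate case is when the maximum $1$ is attained only on copies of some added point $p$; here the appended definition row gives $1=y_p\le\sum_{j\in S_p}x_j\le k\max_i x_i$, so $\max_i x_i\ge 1/k$, and combining with $\sum_i x_i\ge d\max_i x_i$ (from $w_{\text{maxfrac}}(x)\ge d$) yields $\sum_i x_i\ge d/k$. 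Because all $t_p\ge\lceil(1-1/k)d\rceil$ copies of $p$ are equated in the cone, each carries value $1$ and they contribute $t_p\ge(1-\tfrac1k)d$ to the numerator, so $w_{\text{maxfrac}}(x')=\sum_i x'_i\ge \tfrac{d}{k}+(1-\tfrac1k)d=d$.

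In part~(1) this second case cannot lower the weight at all, because the new columns merely duplicate existing ones and so never exceed the old maximum; this is why part~(1) requires no repetition bound. The crux of the whole argument is thus the interplay in part~(2) between the bound $|S_p|\le k$, forcing $\max_i x_i\ge 1/k$, and the multiplicity $\lceil(1-1/k)d\rceil$, supplying the missing $(1-1/k)d$ to the numerator; together they exactly reconstitute the value $d$, and I would expect the careful bookkeeping of this balance to be the only genuinely technical step.
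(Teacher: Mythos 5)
Your proposal is correct and follows essentially the same route as the paper: append one row per new coordinate (a weight-2 row tying a repeated coordinate to its original in part~(1); a definitional row $e_{\mathrm{new}}+\sum_{j\in S_p}e_j$ plus weight-2 rows equating the copies in part~(2)), decompose a pseudocodeword as $(x,y)$ with $x\in\mcK(H)$, and in the max-fractional case split on where the maximum is attained, balancing $\max_i x_i\ge 1/k$ against the multiplicity $\lceil(1-1/k)d\rceil$ exactly as the paper does. The only cosmetic differences are that you equate all copies to the first rather than chaining consecutive copies, and that you spell out the BEC argument which the paper dismisses as ``similar.''
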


\begin{proof}  The proof for the BEC pseudoweight is similar to that
for the max-fractional pseudoweight,  thus only the proof for the
max-fractional pseudoweight is given.

  To prove 1), up to code equivalence, we may assume that $T_c$ is the
  set of the first~$t$ columns (points) of $G$, where $1\le t\le n\!-\!1$,
  and $G'$ determines the value assignment $m'(\cdot)$.  From the
  assumption, it follows that $\mcC'$ is an $[n',k,d]$ code and
  $m'(p_i)\ge m(p_i)$ for any $1\le i\le t$.  Assume $m'(p_i) - m(p_i)
  = \theta_i$ for $1\le i\le t$ and let $H$ be the parity-check matrix
  with $\rho(\mcC)$ rows of $\mcC$.  Put the points $p_i$ for $1\le
  i\le t$ in order after the $n$-th column of $G$ to obtain $G'$, and
  each point~$p_i$, $1\leq i\leq t$, repeats $\theta_i$ times,
  respectively.  Construct a matrix $H'$ as
  \[ H'= \begin{pmatrix}
    H & & & \\
    * & \!H_1\! & & \\[-1.2ex]
    * & &\!\ddots\!& \\
    * & & & \!H_t~ \\
  \end{pmatrix}, \]
  where $H_i$ is an identity matrix of order $\theta_i$, and ``$*$''
  corresponding to $H_i$ has entries one at the $i$-th column of $H$
  for $1\le i\le t$.  It can be checked that $H'$ is a matrix with
  $\rho(\mcC) + \sum_{i=1}^t\theta_i = \rho(\mcC)+(n'-n)$ rows and
  of rank $(n-k) + \sum_{i=1}^t\theta_i = (n-k)+(n'-n) = n'-k$.
  Furthermore, $H'G'^T=0$, thus, $H'$ is a parity-check matrix of
  $\mcC'$.

  Assume now that $x\in\mcK(H')$.  Then we may write $x = (y, z)$,
  where $y = (y_1, \dots, y_n)\in \mcK(H)$ and $z = (z_1, \dots,
  z_{n'-n}) \in \R^{n'-n}$.  Note that $z$ may be rewritten as $z =
  (y_1,\dots,y_1, \,\cdots, y_t,\dots,y_t)$ where each $y_i$, $1\le
  i\le t$, repeats $\theta_i$ times, respectively.  Thus, for
  any $x\in\mcK(H')$, we have
  \begin{align*}
    w(x) &\,=\, \frac{\sum_{j=1}^n y_j + \sum_{i=1}^{n'-n} z_i}
    {\max_{j,i}\{y_j, z_i\}} \\
    &\,=\, \frac{\sum_{j=1}^n y_j + \sum_{i=1}^t \theta_i y_i}
    {\max_j \{y_j\}} \\
    &\,\ge\, \frac{\sum_{j=1}^n y_j}{\max_j\{y_j\}}
    \,\ge\, w_{\min}(H) \,=\, d \,.
  \end{align*}
  Thus, $w_{\min}(H')\ge d = d(\mcC')$, and so the result holds. \medskip

  % QUESTION: Does this proof generalise also to other pseudoweights,
  % or are there counterexamples to the result?

  For 2), it follows from the assumption that $\mcC'$ is an $[n',k,d]$
  code.  Let $G'$ be obtained by adding $t$ points $p_i$ in order,
  $1\leq i\leq t$, of the set $\overline{T_c}$ to the columns of $G$
  and assume that each point $p_i$ repeats $\theta_i\geq
  \lceil(1-1/k)d\rceil$ times, respectively.  Since $\mcC$ is a
  $k$-dimensional code, there exist basis points $b_1,\dots,b_k$ in
  the columns of $G$, and we may suppose without loss of generality
  that $b_j$ is in the $j$-th position in $G$, for $1\le j\le k$.
  Write each point $p_i$ as $p_i = \sum_{j=1}^k c_{ij} b_j$ and denote
  the support set by $A_i := \{ j \mid c_{ij} = 1 \} \subset
  \{1,\dots,k\}$, for $1\le i\le t$.

  Let $H$ be the parity-check matrix with $\rho(\mcC)$ rows of
  $\mcC$.  Construct a matrix $H'$ as
  \[ H' =
  \begin{pmatrix}
    H''\, \\
    h_1 \\
    \vdots \\
    h_t
  \end{pmatrix} , \]
  where
  \[ H'' =
  \begin{pmatrix}
    H & & & \\
    & \!H_1\! & & \\[-1.2ex]
    & & \!\ddots\! & \\
    & & & \!H_t~
  \end{pmatrix} ; \]
  here, each $H_i$, $1\le i\le t$, is a $(\theta_i \!-\! 1) \times
  \theta_i$ submatrix whose entries are defined as follows
  \begin{equation}\label{eq:subm}
    (H_i)_{st} =
    \begin{cases}
      1~ & \text{if }t\in\{s,s\!+\!1\},\\
      0~ & \text{otherwise,}
    \end{cases}
  \end{equation}
  and each $h_i$, $1\le i\le t$, is a binary vector with coordinate
  positions of $h_i$ equal to one whenever the position is in $A_i$ or
  the position corresponds to the first column of $H_i$.

  Since $\sum_{i=1}^t \theta_i = n'-n$, the matrix $H'$ is a
  $(\rho(\mcC)+(n'-n)) \times n'$ matrix of rank $n-k+(n'-n) = n'-k$.
  Furthermore, $H'G'^{T}=0$, thus, $H'$ is a parity-check matrix of
  $\mcC'$.

  Let $x\in \mcK(H')$.  Then, $x$ may be written as $x = (y, z)$,
  where $y = (y_1, \dots, y_n) \in \mcK(H)$, and $z\in \R^{n'-n}$.
  Note that $z$ may be written as $z = (z_1,\dots,z_1, \,\cdots,
  z_t,\dots,z_t)$, and each~$z_i$, $1\le i\le t$, repeats $\theta_i$
  times, respectively.

  If $\max_{j,i}\{y_j,z_i\} = \max_j\{y_j\}$, where $1\le j\le n$
  and $1\le i\le t$, then
  \begin{align*}
    w(x) &\,=\, \frac{\sum_{j=1}^ny_j + \sum_{i=1}^t\theta_iz_i}
    {\max_{j,i}\{y_j, z_i\}} \\
    &\,=\, \frac{\sum_{j=1}^ny_j + \sum_{i=1}^t\theta_iz_i}{\max_j\{y_j\}} \\
    &\,\ge\, \frac{\sum_{j=1}^n y_j}{\max_j\{y_j\}} \\
    &\,\ge\, w_{\min}(H) \,=\, d \,=\, d(\mcC')
    \qquad (\text{by } y \in \mcK(H)) \,.
  \end{align*}
  If $\max_{j,i}\{y_j, z_i\} = \max_i\{z_i\} = z_{i_0}$, then
  by the fundamental cone inequalities~(\ref{eq:cone}) we have
  $z_{i_0} \le \sum_{j\in A_{i_0}} y_j \le k \max_j \{y_j\}$,
  and since $\theta_{i_0} \ge (1-1/k)d$ we conclude
  \begin{align*}
    w(x) &\,=\, \frac{\sum_{j=1}^ny_j + \sum_{i=1}^t\theta_iz_i}
    {\max_{j,i}\{y_j, z_i\}} \\
    &\,=\, \frac{\sum_{j=1}^ny_j + \sum_{i=1}^t\theta_iz_i}{z_{i_0}} \\
    &\,\ge\, \frac{\sum_{j=1}^ny_j}{z_{i_0}} + \theta_{i_0} \\
    &\,\ge\, \frac{\sum_{j=1}^ny_j}{k\max_j\{y_j\}} + \theta_{i_0} \\
    &\,\ge\, (1/k)d+(1-1/k)d \qquad (\text{by } y \in \mcK(H)) \\
    &\,=\, d \,=\, d(\mcC') \,.
  \end{align*}
  Thus, $w_{\min}(H')\geq d(\mcC')$ in any case, and so the
  result follows.
\end{proof}

\begin{rem}
 Whether the above theorem holds for the other two  pseudoweights is
 an open problem.
\end{rem}

Let $\mcC$ be a binary linear code of length $n$ and let $\mcI'
\subset \mcI = \{1,2,\dots, n\}$ be a subset of $\mcI$.  Define \[
\mcC_{\mcI'} \,=\, \{c\in\mcC \mid \text{supp}(c) \subset \mcI'\}
\,, \] which is the {\em shortened subcode} of $\mcC$ supported by
$\mcI'$. Regarding the pseudoredundancy of the shortened subcode
${\mcC}_{{\mcI}'}$, the result is as follows.

\begin{thm}\label{thno4}
  Let $\mcC$ be an $[n,k,d]$ code and let $c \in \mcC$ be any codeword
  of minimum weight~$d$.  Then, for any shortened subcode
  ${\mcC}_{{\mcI}'}$ containing the codeword $c$, we have
  \[ \rho({\mcC}_{{\mcI}'}) \;\le\; \rho(\mcC) + (n-|{\mcI}'|) \]
  for all the four pseudoweights.
\end{thm}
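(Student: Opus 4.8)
The plan is to build a parity-check matrix for $\mcC_{\mcI'}$ directly from an optimal one for $\mcC$ by appending singleton rows that pin the coordinates outside $\mcI'$ to zero. Write $\mcI'' = \mcI\setminus\mcI'$, so that $|\mcI''| = n-|\mcI'|$. First I would dispose of the trivial case $\rho(\mcC)=\infty$ and record that $d(\mcC_{\mcI'})=d$: since $\mcC_{\mcI'}\subseteq\mcC$ every nonzero word has weight at least $d$, while the hypothesis $c\in\mcC_{\mcI'}$ supplies a word of weight exactly $d$. This is the only place the assumption ``${\mcC}_{{\mcI}'}$ containing $c$'' is needed.

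Next, let $H$ be a parity-check matrix of $\mcC$ with $\rho(\mcC)$ rows and $w_{\min}(H)=d$, and form
\[ H' = \begin{pmatrix} H \\ E \end{pmatrix}, \]
where $E$ consists of the $n-|\mcI'|$ singleton rows $e_i^{T}$, $i\in\mcI''$ (here $e_i$ denotes the $i$-th unit vector of length $n$). Then $H'$ has $\rho(\mcC)+(n-|\mcI'|)$ rows. I would check that $H'$ is a parity-check matrix of $\mcC_{\mcI'}$: a vector $x$ lies in its kernel iff $Hx^{T}=0$ and $x_i=0$ for every $i\in\mcI''$, i.e.\ iff $x\in\mcC$ with $\supp(x)\subseteq\mcI'$, which is exactly $\mcC_{\mcI'}$.

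The heart of the argument is the behaviour of the fundamental cone. For a singleton row $e_i^{T}$ the set $\mcI_j$ in~(\ref{eq:cone}) equals $\{i\}$, so the only inequality it contributes is $x_i\le\sum_{\ell\in\{i\}\setminus\{i\}}x_\ell=0$; together with $x_i\ge0$ this forces $x_i=0$. Since the remaining rows of $H'$ are precisely the rows of $H$ and carry exactly the $\mcK(H)$-inequalities, I obtain
\[ \mcK(H') = \{\, x\in\mcK(H) \mid x_i=0 \text{ for all } i\in\mcI'' \,\} \subseteq \mcK(H). \]
This inclusion is what makes the proof go through \emph{simultaneously} for all four pseudoweights: every nonzero $x\in\mcK(H')$ is itself a pseudocodeword of $H$, so $w(x)\ge w_{\min}(H)=d$ with $w$ being any of the four pseudoweights, whence $w_{\min}(H')\ge d$. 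Combined with the general bound $w_{\min}(H')\le d(\mcC_{\mcI'})=d$ this gives $w_{\min}(H')=d$, so $H'$ realises the minimum pseudoweight with $\rho(\mcC)+(n-|\mcI'|)$ rows, and the stated inequality follows.

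The step I expect to require the most care is the cone identity for $\mcK(H')$, specifically the empty-sum reading of~(\ref{eq:cone}) for singleton rows; once that is granted, the inclusion $\mcK(H')\subseteq\mcK(H)$ is immediate and no reweighting of coordinates is needed. This is exactly the feature absent from Theorem~\ref{thno3}, where enlarging the code changed the cone and forced a separate treatment of the max-fractional and BEC weights; here the cone only shrinks, so all four pseudoweights are handled at once.
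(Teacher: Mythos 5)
Your proof is correct, and it takes a more direct and self-contained route than the paper's. The paper first passes to the punctured code $\mcC'_{\mcI'}$: it shows that the column-restricted matrix $H_{\mcI'}$ is a parity-check matrix for that code with $w_{\min}(H_{\mcI'})\ge d$ (via the embedding $x\mapsto(x,0)\in\mcK(H)$), and then invokes Lemma~4.1 of the Zumbr\"agel--Skachek--Flanagan paper to return from the punctured code to the length-$n$ shortened subcode at a cost of $n-|\mcI'|$ extra rows. You instead skip the puncturing step entirely and append the $n-|\mcI'|$ singleton rows to $H$ in one stroke, verifying from the empty-sum reading of the cone inequalities that each such row forces the corresponding cone coordinate to vanish, so that $\mcK(H')\subseteq\mcK(H)$ and all four pseudoweights are handled simultaneously. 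The mechanism is ultimately the one hidden inside the cited external lemma, but your version makes it explicit and avoids both the intermediate code and the citation; the paper's version, in exchange, reuses an established result and keeps the written argument shorter. Both correctly isolate the role of the hypothesis $c\in\mcC_{\mcI'}$ as guaranteeing $d(\mcC_{\mcI'})=d$.
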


\begin{proof}
  Let $H$ be a parity-check matrix of $\mcC$ with $\rho(\mcC)$ rows
  and let $H_{\mcI'}$ be the submatrix of~$H$ consisting of the
  columns corresponding to $\mcI'$.  Define $\mcC'_{\mcI'}$ as the
  code obtained by puncturing those columns of $\mcC_{\mcI'}$
  corresponding to $\mcI\setminus{\mcI'}$.  Then it can be checked
  that $H_{\mcI'}$ is a parity-check matrix of $\mcC'_{\mcI'}$.  Since
  $\mcC'_{\mcI'}$ is a linear code with minimum distance $d$ according
  to the assumption, and since for any $x\in \mcK(H_{\mcI'})$ and
  $(x,0)\in \R^n$ we have $(x,0)\in \mcK(H)$, it follows that
  \[ w_{\min}(H_{\mcI'}) \,\ge\, w_{\min}(H)
  \,=\, d(\mcC) \,=\, d \,=\, d(\mcC'_{\mcI'}) \,. \]

  Thus, $\rho(\mcC'_{\mcI'})\leq \rho(\mcC)$.  Then, using the proof
  of Lemma~4.1 in~\cite{zsf}, we get $\rho(\mcC_{\mcI'})\le
  \rho(\mcC'_{\mcI'}) + (n-|\mcI'|)\le \rho(\mcC)+(n-|\mcI'|)$.
\end{proof}

A code $\mcC$ is called {\em subcode-complete} if any subcode $\mcD$
of $\mcC$ can be written as $\mcD = \mcC_{\mcI'}$ for some
$\mcI'\subset\mcI$.  Define $\supp(\mcD) =
\bigcup_{c\in\mcD}\supp(c)$.  Since $\supp(\mcD) = \bigcap \{\mcI'\mid
\mcC_{\mcI'}\supset\mcD\}$, it follows that a code $\mcC$ is
subcode-complete if and only if $\mcD = \mcC_{\supp(\mcD)}$ for any
subcode $\mcD$ of $\mcC$.  The following result gives a judging rule
for a code to be subcode-complete by using the value assignment.

\begin{thm}\label{thno5}
  A code $\mcC$ with value assignment $m(\cdot)$ is subcode-complete
  if and only if ${m(p) > 0}$ for all $p\in PG({k\!-\!1},2)$.
\end{thm}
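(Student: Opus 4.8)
The plan is to translate subcode-completeness into a purely linear-algebraic spanning condition on the columns of $G$, and then read off both implications. Since $G$ has full row rank $k$, the map $u\mapsto uG$ is a bijection $\GF(2)^k\to\mcC$, so every subcode $\mcD$ has the form $\mcD=\{uG\mid u\in U\}$ for a unique subspace $U\subseteq\GF(2)^k$. Over $\GF(2)$ the columns $g_1,\dots,g_n$ of $G$ are exactly the nonzero vectors realising the points of $PG(k\!-\!1,2)$, with $g_i$ occurring according to the value assignment. First I would compute the support: since $(uG)_i=\langle u,g_i\rangle$, a coordinate $i$ lies in $\supp(\mcD)$ iff $\langle u,g_i\rangle\neq 0$ for some $u\in U$, that is, iff $g_i\notin U^{\perp}$. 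Hence $\mcI\setminus\supp(\mcD)=\{i\mid g_i\in U^{\perp}\}$.

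Next I would identify $\mcC_{\supp(\mcD)}$. A codeword $vG$ satisfies $\supp(vG)\subseteq\supp(\mcD)$ iff $(vG)_i=0$ for every $i$ with $g_i\in U^{\perp}$, i.e.\ iff $v$ is orthogonal to every column of $G$ lying in $U^{\perp}$. Writing $W$ for the span of those columns, this says $v\in W^{\perp}$, so $\mcC_{\supp(\mcD)}=\{vG\mid v\in W^{\perp}\}$. Because each such column lies in $U^{\perp}$ we always have $W\subseteq U^{\perp}$, whence $U\subseteq W^{\perp}$ (recovering the inclusion $\mcD\subseteq\mcC_{\supp(\mcD)}$ noted before the theorem). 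Using bijectivity of $u\mapsto uG$, the equality $\mcD=\mcC_{\supp(\mcD)}$ holds iff $U=W^{\perp}$, i.e.\ iff $W=U^{\perp}$. Thus $\mcD=\mcC_{\supp(\mcD)}$ exactly when the columns of $G$ contained in $U^{\perp}$ span $U^{\perp}$. Since $U\mapsto U^{\perp}$ is an inclusion-reversing bijection on subspaces, as $\mcD$ ranges over all subcodes the space $U^{\perp}$ ranges over all subspaces of $\GF(2)^k$; hence $\mcC$ is subcode-complete if and only if, for every subspace $V\subseteq\GF(2)^k$, the columns of $G$ lying in $V$ span $V$.

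With this reformulation both directions are immediate. For necessity, I would specialise to a one-dimensional $V=\{0,p\}$ with $p\in PG(k\!-\!1,2)$: the only nonzero vector of $V$ is $p$, so the columns contained in $V$ are precisely the copies of $p$, and they span $V$ only if $p$ actually occurs, i.e.\ $m(p)>0$; ranging over all $p$ gives the stated condition. For sufficiency, if $m(p)>0$ for every $p$, then for any subspace $V$ each of its nonzero vectors occurs as a column of $G$, so the columns contained in $V$ already include a basis of $V$ and therefore span it.

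The genuinely substantive step is the middle reformulation: arranging the orthogonality bookkeeping so that $\mcD=\mcC_{\supp(\mcD)}$ becomes the clean statement that the columns inside $U^{\perp}$ span $U^{\perp}$. Once the $\perp$-correspondence is in place, the two implications are one line each. The only point requiring a little care is that $G$ may repeat columns, but multiplicities are irrelevant to the spanning condition, and over $\GF(2)$ there is no scalar ambiguity between a column and the projective point it represents, so the passage to $m(\cdot)$ is exact.
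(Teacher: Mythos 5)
Your proof is correct and follows essentially the same route as the paper: both arguments rest on the orthogonal-complement correspondence between subcodes $\{uG\mid u\in U\}$ and subspaces $U$, with your one-dimensional spaces $V=\{0,p\}$ corresponding exactly to the hyperplane subcodes $(p_0)^{\perp}$ the paper uses for necessity, and your sufficiency computation matching the paper's double-perp argument for $\mcC_{\supp(\mcD)}=\mcD$. Your packaging of both directions into the single criterion that the columns inside every subspace span it is a tidy unification, but not a different proof.
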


\begin{proof}
  Let $\mcC$ be subcode-complete.  Assume that $G$ is a generator
  matrix corresponding to $m(\cdot)$.  If there exists a point $p_0$
  such that $m(p_0) = 0$, then consider the $(k\!-\!1)$-dimensional
  subspace $(p_0)^{\perp}$ of $\GF(2)^k$, where
  \[ (p_0)^{\perp} \,=\, \{v\mid  v \text{ is perpendicular to } p_0\} \,. \]
  It follows that $\mcD=\{c\mid c=vG \text{ and } v\in
  (p_0)^{\perp}\}$ is a $(k\!-\!1)$-dimensional subcode of $\mcC$.
  Since $m(p_0)=0$, we get $\supp(\mcD)=\supp(\mcC)$.  Thus, $\mcD\neq
  \mcC_{\supp(\mcD)}=\mcC_{\supp(\mcC)}=\mcC$, a contradiction to that
  $\mcC$ is subcode-complete.

  Conversely, suppose that $m(p)>0$ for each $p\in PG(k\!-\!1,2)$ and
  consider any $r$-dimensional ($1\le r\le k$) subcode $\mcD$.  Note
  that a generator matrix of $\mcD$ can be written as $U_{r\times k}G$
  for some matrix $U_{r\times k}$.  Denote
  \[ (U_{r\times k})^{\perp} \,=\, \{p\mid p\in PG(k\!-\!1,2)
  \text{ and $p$ is perpendicular to each row of } U_{r\times k}\} \,. \]
  Then, $(U_{r\times k})^{\perp}$ is a $(k\!-\!r\!-\!1)$-dimensional
  subspace of $PG(k\!-\!1,2)$.

  Since $m(p)>0$ for each $p\in PG(k\!-\!1,2)$,  we get that the set
  \[ W \,=\, \{p\mid p\in (U_{r\times k})^{\perp}\text{ and }m(p)>0\} \]
  is equal to the $(k\!-\!r\!-\!1)$-dimensional projective subspace
  $(U_{r\times k})^{\perp}$.  Observe that $\supp(\mcD)$ corresponds to
  those columns of $G$ (considered as points of $PG(k\!-\!1,2)$) which are
  not contained in $W = (U_{r\times k})^{\perp}$.  Thus,
  \begin{align*}
    \mcC_{\supp(\mcD)} &\,=\,
    \{ c\mid c = vG \text{ and $v\in\GF(2)^k$ is} \\
    &\qquad\qquad\text{perpendicular to each point in }
    W = (U_{r\times k})^{\perp} \} \\
    &\,=\, \{ c\mid c = vG \text{ and $v\in GF(2)^k$ is} \\
    &\qquad\qquad\text{a linear combination of the rows of }
    U_{r\times k} \} \\
    &\,=\, \mcD \,.
  \end{align*}
  Thus, $\mcC$ is subcode-complete.
\end{proof}

From Theorems~\ref{thno4} and~\ref{thno5}, one gets the following
result.

\begin{cor}
  Let $\mcC$ be a subcode-complete $[n,k,d]$ code and let $c$ be any
  codeword with minimum weight~$d$.  Then, for any subcode $\mcD$
  containing $c$, there holds
  \[ \rho(\mcD) \;\le\; \rho(\mcC) + (n-|\supp(\mcD)|) \]
  for all the four pseudoweights.
\end{cor}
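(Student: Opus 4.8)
The plan is to derive this corollary directly from the two preceding theorems, treating the subcode $\mcD$ as a shortened subcode supported on its own support set $\supp(\mcD)$.

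First I would invoke Theorem~\ref{thno5}: since $\mcC$ is subcode-complete, its value assignment satisfies $m(p) > 0$ for every point $p \in PG(k-1,2)$, and, as recorded in the discussion preceding that theorem, subcode-completeness is equivalent to the statement that every subcode $\mcD$ of $\mcC$ coincides with its shortening on its own support, i.e.\ $\mcD = \mcC_{\supp(\mcD)}$. Setting $\mcI' := \supp(\mcD)$, this exhibits $\mcD$ as the shortened subcode $\mcC_{\mcI'}$, which is precisely the type of object to which Theorem~\ref{thno4} applies.

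Next I would verify the hypotheses of Theorem~\ref{thno4}. The codeword $c$ has minimum weight $d$ in $\mcC$ by assumption, and since $c \in \mcD$ we have $\supp(c) \subseteq \supp(\mcD) = \mcI'$, so $c \in \mcC_{\mcI'}$; hence the shortened subcode $\mcC_{\mcI'}$ does contain $c$. Theorem~\ref{thno4} then yields
\[ \rho(\mcC_{\mcI'}) \;\le\; \rho(\mcC) + (n - |\mcI'|) \]
for all four pseudoweights. Substituting $\mcI' = \supp(\mcD)$ and using $\mcD = \mcC_{\mcI'}$ gives exactly
\[ \rho(\mcD) \;\le\; \rho(\mcC) + (n - |\supp(\mcD)|), \]
as required.

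There is no substantial obstacle here, since the entire content is carried by Theorems~\ref{thno4} and~\ref{thno5}: the former supplies the shortening bound and the latter the rewriting $\mcD = \mcC_{\supp(\mcD)}$. The only point needing (minor) care is confirming that $c$ actually lies in $\mcC_{\supp(\mcD)}$, which is immediate from the inclusion $\supp(c) \subseteq \supp(\mcD)$. Thus the corollary is a clean combination of the subcode-completeness characterization with the shortened-subcode bound.
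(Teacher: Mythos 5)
Your proposal is correct and follows exactly the route the paper intends: the corollary is stated as an immediate consequence of Theorems~\ref{thno4} and~\ref{thno5}, obtained by writing $\mcD = \mcC_{\supp(\mcD)}$ via subcode-completeness and then applying the shortened-subcode bound with $\mcI' = \supp(\mcD)$. Your extra check that $c \in \mcC_{\supp(\mcD)}$ (from $\supp(c) \subseteq \supp(\mcD)$) is the right hypothesis verification and matches what the paper leaves implicit.
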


We may show that some special subcode-complete codes have finite
pseudoredundancy and one example of such codes is a binary linear
constant-weight code.
% Recall that a linear $q$-ary ($q$ may be any
% prime power) $[n,k,d]$ code is said to be a {\em constant-weight code}
% if all nonzero codewords have the same Hamming weight $d$.  It is
% shown in~\cite{lc} that a linear $q$-ary $[n,k,d]$ code is
% constant-weight if and only if its value assignment $m(\cdot)$ takes
% the same value at each point $p\in PG(k\!-\!1,q)$.
Since any binary linear constant-weight code consists of copies of a
binary simplex code, or equivalently,  the value assignment of a
linear constant-weight code  takes the same value at each point
$p\in
  PG(k\!-\!1,2)$,   a linear constant-weight code is subcode-complete by Theorem~\ref{thno5}.
 In~\cite{zsf} it is shown that a binary simplex code has finite
pseudoredundancy as follows.

\begin{lem}[{\cite[Prop.~7.8]{zsf}}]\label{simplex}
  For $k\geq 2$, the $[2^k\!-\!1, k, 2^{k-1}]$ simplex code $\mcC$
  satisfies
  \[ \rho(\mcC) \;\le\; \frac{(2^k-1)(2^{k-1}-1)}{3} \]
  for all the four pseudoweights.
\end{lem}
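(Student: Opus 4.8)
The plan is to exhibit an explicit parity-check matrix $H$ whose rows are the weight-$3$ codewords of the dual code, and then to verify that $w_{\min}(H)=d=2^{k-1}$. Recall that the dual of the simplex code $\mcC$ is the $[2^k\!-\!1,\,2^k\!-\!1\!-\!k,\,3]$ Hamming code, whose coordinate positions are identified with the points of $PG(k\!-\!1,2)$. A weight-$3$ codeword of the Hamming code is supported on a triple $\{p_1,p_2,p_3\}$ with $p_1+p_2+p_3=0$, that is, on a \emph{line} of $PG(k\!-\!1,2)$. I would take $H$ to have exactly one row per line, supported on the three points of that line. The number of lines of $PG(k\!-\!1,2)$ equals the Gaussian binomial $\binom{k}{2}_2=\frac{(2^k-1)(2^{k-1}-1)}{3}$, which is precisely the claimed bound; hence $H$ has the right number of rows.

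Next I would check that $H$ is a genuine parity-check matrix of $\mcC$. Each row lies in $\mcC^{\perp}$: if $c=uG$ is a codeword of the simplex code and $\{p_1,p_2,p_3\}$ is a line, then $c_{p_1}+c_{p_2}+c_{p_3}=u\cdot(p_1+p_2+p_3)=0$. It remains to see that these weight-$3$ codewords span all of $\mcC^{\perp}$, i.e.\ that the Hamming code is generated by its minimum-weight codewords; equivalently, that the $\GF(2)$-rank of the point-line incidence matrix of $PG(k\!-\!1,2)$ equals $2^k\!-\!1\!-\!k$. This is a standard fact (for $k=2$ the single row $(1,1,1)$ already generates the $[3,1,3]$ repetition code, and for $k=3$ the seven Fano lines have $2$-rank $4$), and it guarantees $\rank(H)=n-k$, so that $H$ is indeed a parity-check matrix of $\mcC$.

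The core of the argument is the lower bound $w_{\min}(H)\ge d$. Since $w_{\text{maxfrac}}(x)\le w(x)$ for every pseudocodeword $x$ and for each of the four pseudoweights — the pointwise inequalities underlying~(\ref{eq:inequality}) — it suffices to prove $w_{\text{maxfrac}}(x)\ge 2^{k-1}$ for all $x\in\mcK(H)\setminus\{\mathbf{0}\}$; combined with the universal upper bound $w_{\min}(H)\le d$, this yields $w_{\min}(H)=d$ for all four pseudoweights simultaneously. Normalize so that $\max_p x_p=1$, attained at a point $p_0$. For any other point $q$, the line through $p_0$ and $q$ is $\{p_0,\,q,\,p_0+q\}$, and the corresponding fundamental-cone inequality~(\ref{eq:cone}) gives $1=x_{p_0}\le x_q+x_{p_0+q}$. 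The $2^k-2$ points other than $p_0$ split into the $2^{k-1}-1$ disjoint pairs $\{q,\,p_0+q\}$, and summing $x_q+x_{p_0+q}\ge 1$ over these pairs gives $\sum_{q\ne p_0}x_q\ge 2^{k-1}-1$. Adding $x_{p_0}=1$ yields $\sum_p x_p\ge 2^{k-1}$, hence $w_{\text{maxfrac}}(x)=\sum_p x_p\ge 2^{k-1}=d$, as required.

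I expect the main obstacle to be the rank (spanning) claim of the second paragraph: the pairing argument for the max-fractional bound is short once the cone inequalities for the weight-$3$ checks are written down, whereas establishing rigorously that the weight-$3$ codewords generate the full Hamming code — rather than a proper subcode whose parity-check matrix would describe a strictly larger code than $\mcC$ — is the step that genuinely needs the incidence-rank fact for $PG(k\!-\!1,2)$. The remaining reduction to all four pseudoweights is then immediate from the chain $w_{\text{maxfrac}}(x)\le w(x)$ together with $w_{\min}(H)\le d(\mcC)$.
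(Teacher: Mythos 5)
Your proposal is correct and uses exactly the construction this paper attributes to~\cite[Prop.~7.8]{zsf} and describes in the discussion following the lemma: the parity-check matrix whose rows are the weight-$3$ codewords of the dual Hamming code, viewed as the $\frac{1}{3}(2^k-1)(2^{k-1}-1)$ lines of $PG(k\!-\!1,2)$. The paper itself only cites the result, but your added details — the spanning fact for the weight-$3$ dual codewords (which follows since any vector orthogonal to all lines defines an additive, hence linear, function on $\GF(2)^k$ and so lies in the simplex code) and the pairing argument giving $\sum_p x_p \ge 2^{k-1}\max_p x_p$ — are sound and correctly reduce all four pseudoweights to the max-fractional case.
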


In the proof of Lemma~\ref{simplex} (see~\cite{zsf}), a parity-check
matrix $H'$ of $\mcC$ is chosen such that the rows of $H'$ consist of
all the codewords of the Hamming code (the dual code of the simplex
code $\mcC$) with Hamming weight equal to 3.  In our viewpoint, the
value assignment of the simplex code~$\mcC$ satisfies $m(p) = 1$ for
any $p\in PG(k\!-\!1,2)$, that is, the columns of a generator matrix
of~$\mcC$ are exactly all the different points in $PG(k\!-\!1,2)$.  By
using such a viewpoint, we may give an alternative explanation of the
bound in $\rho(\mcC)$ in Lemma~\ref{simplex}.  Since any row of $H'$
can be viewed as a linear relation of three different columns of the
generator matrix of $\mcC$, any row of $H'$ can also be viewed as a
{\em line} (spanned by two projective points) in $PG(k\!-\!1,2)$.
Thus, the number of rows in $H'$ equals the number of lines in
$PG(k\!-\!1,2)$, which is $\frac 13 (2^k-1) (2^{k-1}-1)$.

By using Lemma \ref{simplex} and the structure of a linear
constant-weight code, we obtain:

\begin{thm}
  Any binary linear $[n,k,d]$ constant-weight code $\mcC$  satisfies
  \[ \rho(\mcC) \;\le\; n + \frac{(2^k-1)(2^{k-1}-4)}{3} \]
  for all the four pseudoweights.
\end{thm}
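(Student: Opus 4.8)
The plan is to realise the constant-weight code $\mcC$ as a simplex code in which every coordinate has been repeated a fixed number of times, and then to build an explicit parity-check matrix by combining a good parity-check matrix of the simplex code with a family of ``repetition'' rows. Writing $N = 2^k-1$, the constant-weight structure (as recalled just before Lemma~\ref{simplex}) means the value assignment of $\mcC$ is constant, say $m(p) = t$ for every $p \in PG(k\!-\!1,2)$; consequently $n = tN$ and $d = t\cdot 2^{k-1}$. I would index the coordinates of $\mcC$ as pairs $(i,r)$ with $i\in\{1,\dots,N\}$ labelling the projective point and $r\in\{1,\dots,t\}$ labelling its copy.

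First I would invoke Lemma~\ref{simplex} to fix a parity-check matrix $H_s$ of the simplex code $\mathcal S$ whose max-fractional minimum pseudoweight equals $2^{k-1}$ and which has at most $(2^k-1)(2^{k-1}-1)/3$ rows. I then assemble a parity-check matrix $H$ of $\mcC$ from two blocks. The first block lifts each row of $H_s$ so that it acts only on the first copies $(i,1)$; since $H_s G_s^{T}=0$ for a simplex generator $G_s$ and all copies of a point share the same column, these lifted rows are genuine parity checks of $\mcC$. The second block consists, for each point $i$ and each $r\in\{1,\dots,t-1\}$, of a weight-two row supported on $(i,r)$ and $(i,r+1)$, forcing equality of consecutive copies; there are $N(t-1)$ such rows. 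A short count shows $H$ has rank $(N-k)+N(t-1)=n-k$ and $HG^{T}=0$, so $H$ is a parity-check matrix of $\mcC$, with total row number at most $\tfrac{(2^k-1)(2^{k-1}-1)}{3}+N(t-1) = n + \tfrac{(2^k-1)(2^{k-1}-4)}{3}$, which is exactly the claimed bound.

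The core of the argument is the pseudoweight estimate. Given any nonzero $x\in\mcK(H)$, the fundamental cone inequalities~(\ref{eq:cone}) coming from the weight-two repetition rows read $x_{(i,r)}\le x_{(i,r+1)}$ and $x_{(i,r+1)}\le x_{(i,r)}$, so every copy of a point carries the same value: $x_{(i,r)} = y_i$ for some $y=(y_1,\dots,y_N)\ge 0$. Reading the inequalities from the lifted simplex rows then shows precisely that $y\in\mcK(H_s)$, and $y\ne 0$ because $x\ne 0$. Hence
\[ w_{\text{maxfrac}}(x) = \frac{\sum_{i,r} x_{(i,r)}}{\max_{i,r} x_{(i,r)}} = t\,\frac{\sum_i y_i}{\max_i y_i} = t\, w_{\text{maxfrac}}(y) \ge t\cdot 2^{k-1} = d, \]
using $w_{\min}(H_s)=2^{k-1}$. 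Therefore $w_{\min}(H)=d$ and $\rho_{\text{maxfrac}}(\mcC)$ is at most the row count above.

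Finally, since the chain~(\ref{eq:inequality}) places $\rho_{\text{maxfrac}}$ above each of $\rho_{\text{AWGNC}}$, $\rho_{\text{BSC}}$ and $\rho_{\text{BEC}}$, the same upper bound transfers at once to all four pseudoredundancies, so it suffices to treat the max-fractional case alone. I expect the only delicate point to be verifying that the lifted simplex block genuinely reproduces the fundamental cone $\mcK(H_s)$ on the vector $y$, and that mixing it with the repetition rows introduces no further constraints on $y$; the rank and row-count bookkeeping, while needed, is routine arithmetic.
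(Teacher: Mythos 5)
Your proposal is correct and follows essentially the same route as the paper: realise the constant-weight code as a $t$-fold repetition of the simplex code, stack the simplex parity-check matrix from Lemma~\ref{simplex} on top of $N(t-1)$ weight-two rows that force all copies of a point to agree on the fundamental cone, compute $w_{\text{maxfrac}}(x)=t\,w_{\text{maxfrac}}(y)\ge t\cdot 2^{k-1}=d$, and transfer the bound to the other pseudoweights via~(\ref{eq:inequality}). The only (immaterial) difference is that you chain consecutive copies $(i,r)$--$(i,r+1)$ whereas the paper ties each repeated copy directly back to the first occurrence of its point; both yield the same equivalence classes, the same rank $n-k$, and the same row count.
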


\begin{proof}
  Assume the value assignment of the given binary $[n,k,d]$
  constant-weight code is $m(\cdot)$.   Then,  $m(\cdot)$ takes the same value at each point $p\in PG(k\!-\!1,2)$,  and then  one may
  get that $n=(2^k-1)m(p)$ and $d=2^{k-1}m(p)$ for any point $p\in
  PG(k\!-\!1,2)$.

  Arrange a generator matrix $G$ of the constant-weight code as
  follows: put each point $p\in PG(k\!-\!1,2)$ in some fixed order
  once in the columns of $G$, and then, in the same order as before,
  repeat each of these points $m(p)-1$ times in the columns of $G$.
  According to such a matrix~$G$, a parity-check matrix $H$ of $\mcC$
  can be constructed as follows:
  \[ H \,=\, \begin{pmatrix}
    ~H' & {\bf 0}~ \\
    ~* & I~ \\
  \end{pmatrix} , \]
  where $H'$ is the parity-check matrix of the simplex code given
  below Lemma \ref{simplex}, $\bf 0$ stands for a ${\frac 13
    (2^k\!-\!1) (2^{k-1}\!-\!1)} \times {(m(p)\!-\!1)(2^k\!-\!1)}$
  zero matrix, and $I$ stands for a ${(m(p)\!-\!1)(2^k\!-\!1)} \times
  {(m(p)\!-\!1)(2^k\!-\!1)}$ identity matrix; finally, $*$ stands for
  a ${(m(p)\!-\!1)(2^k\!-\!1)} \times {\frac 13
    (2^k\!-\!1)(2^{k-1}\!-\!1)}$ matrix, which is written as
  \[ * \,=\,
  \begin{pmatrix}
    H_1 \\
    \vdots \\
    ~H_{2^k-1}~
  \end{pmatrix} , \]
  where each~$H_i$, $1 \le i \le 2^k\!-\!1$, is an ${(m(p)\!-\!1)}
  \times {\frac 13 (2^k\!-\!1)(2^{k-1}\!-\!1)}$ matrix, which has
  entries zero except for its $i$-th column, whose entries are all
  equal to one.

  It can be checked that $H$ is a matrix satisfying $HG^T = 0$ and
  \begin{align*}
    \rank(H) &\,=\, {\rank(H') + (m(p)\!-\!1)(2^k\!-\!1)} \\
    &\,=\, (2^k\!-\!1) - k + (m(p)\!-\!1)(2^k\!-\!1) \\
    &\,=\, m(p)(2^k\!-\!1) - k \,=\, n - k \,.
  \end{align*}
  Thus, $H$ is a parity-check matrix of $\mcC$.

  For this parity-check matrix $H$, let $x\in \mcK(H)$. Then,
  according to the fundamental cone inequalities (\ref{eq:cone}), $x$
  may be written as $x = (y, z)$, where $y = (y_1, \dots, y_{2^k-1})
  \in \mcK(H')$, and $z=(z_1, \dots, z_{n-2^{k}+1})\in \R^{n-2^k+1}$
  is obtained from $y$ by repeating $(m(p)\!-\!1)$ times each
  coordinate of $y$.  Thus, $w_{\text{maxfrac}}(x)$ can be computed as follows.
  \begin{align*}
    w_{\text{maxfrac}}(x) &\,=\, \frac{\sum_{i=1}^{2^k-1}y_i + \sum_{j=1}^{n-2^k+1}z_j}
    {\max_{i,j}\{y_i, z_j\}} \\
    &\,=\, \frac{\sum_{i=1}^{2^k-1}y_i + (m(p)-1)\sum_{i=1}^{2^k-1}y_i}
    {\max_{i}\{y_i\}} \\
    &\,=\, m(p) \frac{\sum_{i=1}^{2^k-1}y_i}{\max_{i}\{y_i\}} \\
    &\,=\, m(p) w_{\text{maxfrac}}(y) \\
    &\,\ge\, m(p) 2^{k-1} = d \quad (\text{by } y\in\mcK(H')
    \text{ and Lemma~\ref{simplex}}).
  \end{align*}
  Thus, $w_{\min}(H) \ge d$ for the max-fractional pseudoweight,  and thus
  the result of the theorem holds for the max-fractional pseudoweight   by the fact that  $H$ is a matrix with $\frac 13
  (2^k\!-\!1) (2^{k-1}\!-\!1) + (m(p)\!-\!1)(2^k\!-\!1) = n +
  \frac 13 (2^k\!-\!1) (2^{k-1}\!-\!4)$ rows.   It follows that  the result
  holds  also  for all the  four  pseudoweights  by  (\ref{eq:inequality}).
\end{proof}

%---------------------------------------

\section{$k$-dimensional Codes Constructed by
  Value Assignment}\label{sec:k-dim}

In this section, we will proceed to determine the pseudocodeword
redundancies of certain $k$-dimensional binary codes by making use of
the value assignment.

Let $\mcC$ be an $[n,k]$ binary code determined by a value assignment
$m(\cdot)$.  Then, a basic fact is that
\[ \!\sum_{p\in PG(k\!-\!1,2)}\! m(p) \,=\, n \,. \]

We will use and extend the following results.

\begin{lem}[{\cite[Lem.~6.1]{zsf}}]\label{lem:w2}
  Let $H$ be a parity-check matrix of $\mcC$ such that every row
  in~$H$ has weight 2.  Then:
  \begin{enumerate}
  \item There is an equivalence relation on the set~$\mcI$ of column
    indices of~$H$ such that for a vector $x\in\R^n$ with nonnegative
    coordinates, we have $x\in \mcK(H)$ if and only if $x$ has
    equal coordinates within each equivalence class.
  \item The minimum distance of $\mcC$ is equal to its minimum BEC,
    AWGNC, BSC, and max-fractional pseudoweights with respect to $H$,
    i.e., $d(\mcC)=w_{\min}(H)$.
  \end{enumerate}
\end{lem}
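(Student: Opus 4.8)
The plan is to treat the two parts in sequence, since part~1 furnishes the structural description that the entire argument for part~2 rests on. For part~1 I would read off the fundamental cone inequalities~\eqref{eq:cone} for a single row $j$ of weight~$2$. Writing $\mcI_j=\{a,b\}$, the two inequalities obtained by taking $\ell=a$ and $\ell=b$ read $x_a\le x_b$ and $x_b\le x_a$, which together force $x_a=x_b$. I would therefore define a relation on $\mcI$ by declaring $a$ and $b$ related whenever $\{a,b\}$ is the support of some row of $H$, and let the equivalence relation be its transitive closure. A nonnegative vector $x$ then lies in $\mcK(H)$ if and only if $x_a=x_b$ for every row support $\{a,b\}$, and by transitivity this is exactly the condition that $x$ be constant on each equivalence class; the remaining constraints in~\eqref{eq:cone} are just the standing nonnegativity hypothesis, so nothing further is required.

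For part~2 I would first record the dual structural fact. Since $H$ consists solely of weight-$2$ rows, over $\GF(2)$ each row imposes $c_a=c_b$, so $\mcC=\ker H$ is precisely the set of binary vectors constant on each equivalence class. Writing $C_1,\dots,C_r$ for the classes and $n_s=|C_s|$, a nonzero codeword is an indicator $\sum_{s\in S}\mathbf 1_{C_s}$ of a nonempty union of classes, of weight $\sum_{s\in S}n_s$; hence $d=\min_s n_s$, attained by the indicator of a smallest class. That same indicator is a nonnegative pseudocodeword by part~1, so together with the general upper bound $w_{\min}(H)\le d$ recorded before Definition~\ref{def:pr} we already obtain $w_{\min}(H)\le d$ for all four pseudoweights; the real content is the reverse inequality.

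For the reverse inequality I would show that every nonzero pseudocodeword satisfies $w_{\text{maxfrac}}(x)\ge d$ and that each of the other three pseudoweights dominates the max-fractional one on nonnegative vectors. Using part~1, let $v_s\ge 0$ denote the common value of $x$ on $C_s$ and let $s^*$ attain $\max_s v_s$; then $w_{\text{maxfrac}}(x)=(\sum_s n_s v_s)/\max_s v_s\ge n_{s^*}v_{s^*}/\max_s v_s\ge\min_s n_s=d$. For the domination, on any nonnegative $x$ one has $\sum_i x_i\le|\supp(x)|\max_i x_i$, giving $w_{\text{BEC}}(x)\ge w_{\text{maxfrac}}(x)$, and $\sum_i x_i^2\le(\max_i x_i)\sum_i x_i$, giving $w_{\text{AWGNC}}(x)\ge w_{\text{maxfrac}}(x)$.

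The main wrinkle is the BSC weight, which is the one weight not given by a simple closed ratio, and I expect this to be the only nonroutine step. Here I would unwind the definition: since $\phi(\xi)\le\max_i x_i$ for all $\xi$, setting $e=\Phi^{-1}(\Phi(n)/2)$ yields $\tfrac12\sum_i x_i=\Phi(e)\le e\max_i x_i$, whence $w_{\text{BSC}}(x)=2e\ge(\sum_i x_i)/\max_i x_i=w_{\text{maxfrac}}(x)$. Combining the four bounds, every nonzero pseudocodeword has all four pseudoweights at least $d$, so $w_{\min}(H)\ge d$; together with the upper bound this gives $d(\mcC)=w_{\min}(H)$ for each pseudoweight, completing part~2.
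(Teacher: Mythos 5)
Your proof is correct, and it follows essentially the standard argument: weight-$2$ rows force pairwise equality of cone coordinates, the transitive closure gives the equivalence classes, the code itself consists of the indicator vectors of unions of classes so $d=\min_s n_s$, and the lower bound $w_{\min}(H)\ge d$ is obtained for the max-fractional weight and transferred to the other three via the domination inequalities (which are the pointwise form of the ordering behind~(\ref{eq:inequality})). Note that the paper does not prove this lemma at all --- it imports it verbatim from \cite[Lem.~6.1]{zsf} --- so there is no in-paper proof to diverge from; your reconstruction matches the argument in that reference.
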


\begin{lem}[{\cite[Prop.~6.2]{zsf}}]\label{lem:w2+}
  Let $H$ be an $m\times n$ parity-check matrix of $\mcC$, and assume
  that the $m\!-\!1$ first rows in $H$ have weight 2.  Denote by
  $\widetilde{H}$ the $(m\!-\!1)\times n$ matrix consisting of these rows,
  and consider the equivalence relation of Lemma~\ref{lem:w2}-2.\ with
  respect to $\widetilde{H}$, and assume that~$\mcI_m$ intersects each
  equivalence class in at most one element.  Then, the minimum
  distance of $\mcC$ is equal to its minimum BEC, AWGNC, BSC, and
  max-fractional pseudoweights with respect to $H$, i.e.,
  $d(\mcC)=w_{\min}(H)$.
\end{lem}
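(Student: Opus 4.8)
The plan is to establish the two inequalities $w_{\min}(H)\le d(\mcC)$ and $w_{\min}(H)\ge d(\mcC)$ separately. The first holds for all four pseudoweights and for any parity-check matrix, as recorded in Section~\ref{sec:prelim}, so the real work is the lower bound. Here I would first reduce all four channels to the single case of the max-fractional weight: for every nonzero $x$ one has $w_{\text{maxfrac}}(x)\le w_{\text{BEC}}(x)$, $w_{\text{maxfrac}}(x)\le w_{\text{AWGNC}}(x)$ and $w_{\text{maxfrac}}(x)\le w_{\text{BSC}}(x)$, the max-fractional weight being the pointwise smallest of the four (the pointwise fact underlying~(\ref{eq:inequality})). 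Hence it suffices to show $w_{\text{maxfrac}}(x)\ge d(\mcC)$ for every $x\in\mcK(H)\setminus\{\mathbf 0\}$; together with the upper bound this forces equality for all four.

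Next I would exploit the weight-$2$ rows. Since $H$ contains every row of $\widetilde H$, we have $\mcK(H)\subseteq\mcK(\widetilde H)$, so by Lemma~\ref{lem:w2}(1) any $x\in\mcK(H)$ is constant on each equivalence class $E_1,\dots,E_r$ of the relation attached to $\widetilde H$. Writing $n_\ell=|E_\ell|$ and letting $v_\ell\ge 0$ be the common value of $x$ on $E_\ell$, we obtain $w_{\text{maxfrac}}(x)=\bigl(\sum_{\ell}n_\ell v_\ell\bigr)\big/\max_\ell v_\ell$. The codewords of $\mcC$ are binary vectors lying in $\mcK(\widetilde H)$, hence are also constant on the classes, so $\mcC$ is equivalent to a length-$r$ code in which coordinate $\ell$ is repeated $n_\ell$ times, and $d(\mcC)=\min\{\sum_{\ell\in\supp(c^{*})}n_\ell\}$ over nonzero contracted codewords $c^{*}$. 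The task thus becomes the purely combinatorial inequality $\sum_\ell n_\ell v_\ell\ge d(\mcC)\,\max_\ell v_\ell$ on the contracted cone.

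The decisive step is to read off the last row. By hypothesis $\mcI_m$ meets each class in at most one index, say it selects representatives of the classes $\ell_1,\dots,\ell_s$; the inequalities~(\ref{eq:cone}) for row $m$ then read $v_{\ell_t}\le\sum_{u\ne t}v_{\ell_u}$ for $1\le t\le s$. Normalising $\max_\ell v_\ell=1$ and letting the maximum be attained at a class $\ell_0$, I would argue by cases: if $\ell_0$ is not selected by $\mcI_m$, then the indicator of $E_{\ell_0}$ is itself a codeword, so $\sum_\ell n_\ell v_\ell\ge n_{\ell_0}\ge d(\mcC)$; if $\ell_0=\ell_t$ is selected, the constraint forces $\sum_{u\ne t}v_{\ell_u}\ge 1$, whence $\sum_\ell n_\ell v_\ell\ge n_{\ell_0}+\min_{u\ne t} n_{\ell_u}$, the weight of a contracted codeword supported on two selected classes, again $\ge d(\mcC)$. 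Ruling out that a fractional spreading of the $v_{\ell_u}$ could undercut this bound is the one place needing care, and is the main obstacle; it cannot happen, since a weighted sum subject to $\sum_{u\ne t}v_{\ell_u}\ge 1$ is minimised by concentrating the mass on the cheapest of the other selected classes, so the extremal configurations are exactly the minimum-weight codewords. Assembling the cases gives $w_{\text{maxfrac}}(x)\ge d(\mcC)$, and the reduction of the first paragraph then yields $d(\mcC)=w_{\min}(H)$ for all four pseudoweights.
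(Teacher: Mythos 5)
Your proof is correct. Note first that this lemma is quoted in the paper directly from \cite[Prop.~6.2]{zsf} and no proof is given there, so there is no in-paper argument to compare against; your write-up is therefore judged on its own, and it holds up. The reduction to the max-fractional weight via the pointwise inequalities $w_{\text{maxfrac}}(x)\le w_{\text{BEC}}(x),\,w_{\text{AWGNC}}(x),\,w_{\text{BSC}}(x)$ is valid (Cauchy--Schwarz for AWGNC, $\sum_i x_i\le|\supp(x)|\max_i x_i$ for BEC, and $\Phi(\xi)\le\xi\max_i x_i$ for BSC), and combined with the general bound $w_{\min}(H)\le d(\mcC)$ it does reduce everything to showing $w_{\text{maxfrac}}(x)\ge d(\mcC)$ on $\mcK(H)\setminus\{\mathbf 0\}$. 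The contraction step is sound: $\mcK(H)\subseteq\mcK(\widetilde H)$, so Lemma~\ref{lem:w2} makes $x$ constant on the classes, and since the $\GF(2)$ checks of $\widetilde H$ likewise force codewords to be constant on classes, $\mcC$ contracts to the single-parity-check code on the selected classes $\ell_1,\dots,\ell_s$ with coordinate multiplicities $n_\ell$. Your case split on where the maximum $v_{\ell_0}=1$ is attained is then complete: if $\ell_0$ is unselected, the indicator of $E_{\ell_0}$ is a codeword of weight $n_{\ell_0}$; if $\ell_0=\ell_t$ is selected, the row-$m$ cone inequality gives $\sum_{u\ne t}v_{\ell_u}\ge 1$, hence $\sum_{u\ne t}n_{\ell_u}v_{\ell_u}\ge\min_{u\ne t}n_{\ell_u}$, and $n_{\ell_t}+\min_{u\ne t}n_{\ell_u}$ is the weight of an actual two-class codeword. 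The degenerate situations ($\mcI_m=\varnothing$, or $|\mcI_m|=1$, where the cone forces the selected class value to vanish) are absorbed by your first case, so nothing is missing.
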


Using these lemmas, in~\cite[Cor.~6.4]{zsf} it was shown that all
2-dimensional binary codes $\mcC$ with length $n$ have pseudoredundancy
$\rho(\mcC)=n-2$, and the proof was conducted according to the
analysis of the {\em supports} of the two codewords generating the
$2$-dimensional code.

By the viewpoint of the value assignment, we may consider the
different cases of the supports of the two codewords generating the
$2$-dimensional code as {\em different points} in $PG(1, 2)$.
Generalizing this idea, one may consider for each point occurring in
the columns of a generator matrix of~$\mcC$ the corresponding
equivalence class from Lemma~\ref{lem:w2}, and the size of this
equivalence class is exactly the value of the corresponding value
assignment at this point.  In such a viewpoint, $\mcI_m$ in the
parity-check matrix in Lemma~\ref{lem:w2+} is exactly the linear
relation among different points in the columns of the generator matrix
of $\mcC$.

Using the above stated techniques and Lemmas~\ref{lem:w2}
and~\ref{lem:w2+}, we will in this section construct several kinds of
$[n,k]$ codes whose pseudoredundancies are equal to $n-k$.  The first
result is:

\begin{thm}\hfill
  \begin{enumerate}
  \item For any $k$ independent points $p_i\in PG(k\!-\!1,2)$, $1\le
    i\le k$, if a value assignment $m(\cdot)$ satisfies
    \[ m(p) = \begin{cases}
      z_i\ge 1~ & \text{if }p=p_i,\ 1\le i\le k\,,\\
      0 & \text{otherwise},
    \end{cases} \]
    and there exists some $1\le i_0\le k$ such that $m(p_{i_0}) =
    z_{i_0}\ge 2$, then the $[n,k]$ code $\mcC$ determined by
    $m(\cdot)$ satisfies
    \[ \rho(\mcC) = \textstyle\sum\limits_{i=1}^k z_i-k = n-k \]
    for all the four pseudoweights.

  \item For any $k+1$ points $p_i\in PG(k\!-\!1,2)$, $1\le i\le k+1$,
    such that the points~$p_i$, $1\leq i\leq k$, are independent, if a
    value assignment satisfies
    \[ m(p) = \begin{cases}
      z_i\ge 1~ &  \text{if }p=p_i,\ 1\le i\le  k+1\,,\\
      0 & \text{otherwise},
    \end{cases} \]
    then the $[n,k]$ code $\mcC$ determined by $m(\cdot)$ satisfies
    \[ \rho(\mcC) = \textstyle\sum\limits_{i=1}^{k+1}z_i-k = n-k \]
    for all the four pseudoweights.
  \end{enumerate}
\end{thm}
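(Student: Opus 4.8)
The plan is to prove both parts by exhibiting a parity-check matrix $H$ with exactly $n-k$ rows for which $w_{\min}(H) = d(\mcC)$; since the general bound $\rho(\mcC) \ge n-k$ already holds, this forces $\rho(\mcC) = n-k$ for all four pseudoweights simultaneously (the equality $d = w_{\min}(H)$ delivered by Lemmas~\ref{lem:w2} and~\ref{lem:w2+} is asserted for BEC, AWGNC, BSC, and max-fractional at once). Because equivalent codes share pseudoweights, I would first normalise the $k$ independent points to the standard basis, taking $p_i = e_i$ for $1 \le i \le k$, and arrange $G$ so that the $m(p)$ copies of each occurring point occupy consecutive columns, forming one \emph{block} per point. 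A codeword $uG$ is then constant on each block, so $\mcC$ is exactly the space of block-constant vectors obeying the linear relations among the points.

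For part~1 every occurring point is a basis point, so $\mcC$ is precisely the set of vectors constant on each of the $k$ blocks. I would take $H$ to consist, for each block $i$, of $z_i - 1$ weight-$2$ rows chaining together the $z_i$ copies of $e_i$ (connecting consecutive copies), giving $\sum_{i=1}^k (z_i-1) = n-k$ rows in total; the hypothesis $z_{i_0} \ge 2$ guarantees this matrix is nonempty and excludes the degenerate case $n=k$. The equivalence relation of Lemma~\ref{lem:w2} induced by these rows has the blocks as its classes, so its kernel is exactly the block-constant code $\mcC$. Hence $H$ is a genuine parity-check matrix of rank $n-k$, and Lemma~\ref{lem:w2}-2 yields $d(\mcC) = w_{\min}(H)$ directly.

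For part~2, writing the extra point as $p_{k+1} = \sum_{i\in A} e_i$ with $A \subseteq \{1,\dots,k\}$, I would use $\sum_{i=1}^{k+1}(z_i - 1) = n-k-1$ weight-$2$ rows to make each of the $k+1$ blocks an equivalence class exactly as before, and then append a single final row whose support consists of one representative column from each block $e_i$ with $i \in A$ together with one representative column from the $p_{k+1}$-block. This final row encodes the defining relation $v_{k+1} = \sum_{i\in A} v_i$, so the kernel of $H$ is once more exactly $\mcC$ and $H$ has rank $n-k$ with $n-k$ rows. The purpose of choosing \emph{distinct} representatives is that the support of the final row then meets each block in at most one coordinate, which is precisely the hypothesis of Lemma~\ref{lem:w2+}; invoking that lemma gives $d(\mcC) = w_{\min}(H)$.

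The verifications $HG^T = 0$ and the rank count are routine once the block picture is fixed, so the only point requiring genuine care is checking the structural hypotheses of the two lemmas — in particular, for part~2, ensuring the single non-weight-$2$ row is built so that its support is a transversal hitting each equivalence class at most once. I expect this to be the main thing to get right; everything else reduces to the observation that the weight-$2$ rows collapse the pseudocodeword cone onto block-constant vectors, on which the pseudoweight coincides with the Hamming weight.
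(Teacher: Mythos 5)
Your proposal is correct and follows essentially the same route as the paper: the same block construction with $m(p_i)-1$ weight-$2$ rows chaining the consecutive copies of each point, the same single extra row encoding the relation $p_{k+1}=\sum_{i\in A}p_i$ via one representative column per block, and the same appeals to Lemmas~\ref{lem:w2} and~\ref{lem:w2+}. Your explicit check that the extra row's support meets each equivalence class in at most one element is exactly the hypothesis the paper implicitly relies on when invoking Lemma~\ref{lem:w2+}.
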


\begin{proof}
  For 1), up to code equivalence, we may arrange a generator
  matrix~$G$ of~$\mcC$ in such a way that the first $m(p_1)$ columns
  of $G$ are the point~$p_1$, the next $m(p_2)$ columns of $G$ are the
  point~$p_2$, and in such an order, one proceeds to put the
  point~$p_k$ in the last $m(p_k)$ columns of~$G$.  For this
  matrix~$G$, we construct a matrix $H$ in block diagonal form
  \[ H = \begin{pmatrix}
    H_1\quad\quad~~ \\[-1.2ex]
    \quad~ \ddots \quad~ \\[-.5ex]
    \quad\quad~~ H_k
  \end{pmatrix}, \]
  where $H_i$ is an $(m(p_i)\!-\!1)\times m(p_i)$ submatrix whose
  entries are defined as in (\ref{eq:subm}).  It can be checked that
  $H$ is an $(n-k)\times n$ matrix of rank $n-k$ and $HG^T=0$.  Thus,
  $H$ is a parity-check matrix of $\mcC$, and so $\rho(\mcC) = n-k$ by
  Lemma~\ref{lem:w2}.

  For 2), since the points $p_i$ for $1\le i\le k$ are a basis for
  $PG(k\!-\!1,2)$, one may write $p_{k+1}$ as a linear combination of
  these basis points.  Up to code equivalence, one may write $p_{k+1}
  = \sum_{j=1}^s p_j$ for $s\le k$.  Arrange a generator matrix $G$
  of~$\mcC$ similarly to the proof of 1), that is, the first $m(p_1)$
  columns are the point $p_1$, the next $m(p_2)$ columns are the
  point $p_2$, and in such an order, the last $m(p_{k+1})$ columns are
  the point $p_{k+1}$.  Then, we may construct a matrix $H$ as
  \[ H = \begin{pmatrix}
    \,H'\, \\
    h
  \end{pmatrix}, \]
  where the submatrix  $H'$ is the block diagonal one
  \[ H' = \begin{pmatrix}
     H_1 \quad\quad\quad~~~~~~ \\[-1ex]
    \quad~ \ddots \quad\quad~~~ \\[-.3ex]
    \quad\quad~~ H_k \quad~~ \\
    \quad\quad\quad~~~~~ H_{k+1}\, \\
  \end{pmatrix}, \]
  and $H_i$ for $1\le i\le k+1$ is an $(m(p_i)\!-\!1)\times m(p_i)$
  matrix defined as in (\ref{eq:subm}), and $h$ is a binary row vector
  whose coordinate positions corresponding to the first column of each
  $H_i$ for $1\le i\le t$ and to the first column of $H_{k+1}$ are
  equal to one.  It can be checked that $H$ is an $(n-k)\times n$
  matrix of rank $n-k$ and $HG^T=0$, and so $H$ is a
  parity-check matrix of $\mcC$. Thus, $\rho(\mcC) = n-k$ by
  Lemma~\ref{lem:w2+}.
\end{proof}

In order to determine the pseudocodeword redundancies of more kinds
of $k$-dimensional codes, it is convenient to introduce the
following notations.  Let $p_1, \dots, p_k$ be the points of a basis
of $PG(k\!-\!1,2)$.  Then, any $p\in PG(k\!-\!1,2)$ may be written
as $p=\sum_{i=1}^kc_ip_i$, where $c_i\in \GF(2)$ for $1\le i\le k$.
Call the set of the basis points $p_i$ whose coefficients are
nonzero the {\em representing-set} of the point $p$ with respect to
the basis points $p_1,\dots,p_k$.  If the basis points are clear,
one may simply call this set representing-set of the point $p$.

In the following text of this section, for basis points $p_1, \dots,
p_k$ of $PG(k\!-\!1,2)$, let $S_1,\dots,S_t$ stand for the
representing-sets of $p_{k+1},\dots,p_{k+t}$, respectively.

\begin{defn}
  The points $p_{k+1},\dots,p_{k+t}$ are called
  {\em representing-independent}
  if their representing-sets $S_1,\dots,S_t$ are pairwise disjoint.
  They are called {\em representing-dependent} if
  for all $1\le i\le t$ there exists $1\le j\le t$, $j\ne i$
  such that $S_i\cap S_j \ne \varnothing$.
\end{defn}

For the points that are representing-independent, we have:

\begin{thm}\label{thm:3}
  For any $k+t$ points $p_i\in PG(k\!-\!1,2)$, $1\le i\le k+t$, such that
  the points $p_1,\dots,p_k$ are basis points and the points
  $p_{k+1},\dots,p_{k+t}$ are representing-independent, if the value
  assignment $m(\cdot)$ satisfies
  \[ m(p) = \begin{cases}
    z_i\ge 1~ & \text{if }p=p_i,\ 1\le i\le k+t\,,\\
    0 & \text{otherwise},
  \end{cases} \]
  then the $[n,k]$ code $\mcC$ determined by $m(\cdot)$ satisfies
  $\rho(\mcC)=\sum_{i=1}^{k+t}z_i-k=n-k$ for all the four pseudoweights.
\end{thm}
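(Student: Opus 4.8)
The plan is to combine the automatic lower bound $\rho(\mcC)\ge n-k$, which holds for every $[n,k]$ code and all four pseudoweights, with a matching upper bound coming from a single explicitly constructed parity-check matrix $H$ that has exactly $n-k$ rows and satisfies $w_{\min}(H)=d$. Since the max-fractional weight is pointwise the smallest of the four pseudoweights, it is enough to verify $w_{\min}(H)=d$ for the max-fractional pseudoweight: the same $H$ then forces the other three minimum pseudoweights to be $\ge d$, and as all of them are $\le d$ we obtain $w_{\min}(H)=d$ for all four, as recorded in~(\ref{eq:inequality}).

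To build $H$, I would first (up to code equivalence) take $p_1,\dots,p_k$ to be the standard basis, write each $p_{k+j}=\sum_{i\in S_j}p_i$, and arrange $G$ so that the $z_i$ copies of each point $p_i$ occupy a consecutive block of columns, in the order $p_1,\dots,p_{k+t}$. The matrix $H$ then consists of two kinds of rows: for every point $p_i$ a bidiagonal block $H_i$ of weight-$2$ rows defined as in~(\ref{eq:subm}), collapsing the $z_i$ copies of $p_i$ into one equivalence class; and for every $j$ a single relation row $h_j$ realizing $p_{k+j}=\sum_{i\in S_j}p_i$, with a one in the first column of the block of $p_{k+j}$ and in the first column of the block of each $p_i$, $i\in S_j$. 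Counting rows gives $\sum_{i=1}^{k+t}(z_i-1)+t=n-k$; one then checks $HG^T=0$ and $\rank(H)=n-k$, so $H$ is a parity-check matrix of $\mcC$.

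The crux, and the point at which representing-independence enters, is the identity $w_{\min}(H)=d$. The obstacle is that Lemma~\ref{lem:w2+} covers only a single non-weight-$2$ row, whereas $H$ carries $t$ of them. I would remove this obstacle by noting that the disjointness $S_i\cap S_j=\varnothing$ splits the index set $\{1,\dots,k+t\}$ into pairwise disjoint blocks $B$: one block $\{k+j\}\cup\{\,i : p_i\in S_j\,\}$ for each relation, and a singleton for every basis point lying in no $S_j$. With respect to the associated groups of columns, $H$ is block-diagonal and $\mcC$ is a direct sum of the corresponding subcodes, so $d=\min_B d_B$, where $d_B$ is the minimum distance of the block-$B$ subcode. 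On each block the parity-check matrix is precisely a matrix of weight-$2$ rows together with at most one further row meeting each equivalence class once; hence Lemma~\ref{lem:w2} (for a free point, a repetition code) or Lemma~\ref{lem:w2+} (for a relation block) applies and yields, for the max-fractional weight, the block inequality $\sum_{i\in B}z_iu_i\ge d_B\max_{i\in B}u_i\ge d\max_{i\in B}u_i$, where $u_i$ denotes the common value of $x$ on the copies of $p_i$.

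To finish, take any $x\in\mcK(H)\setminus\{\mathbf 0\}$. The weight-$2$ rows force $x$ to be constant on each equivalence class, so $x$ is described by $u=(u_1,\dots,u_{k+t})$, and summing the per-block inequalities over the partition gives
\[ \sum_{i=1}^{k+t}z_iu_i \;=\; \sum_{B}\sum_{i\in B}z_iu_i \;\ge\; d\sum_{B}\max_{i\in B}u_i \;\ge\; d\max_i u_i . \]
Dividing by $\max_i u_i$ yields $w_{\text{maxfrac}}(x)\ge d$, so $w_{\min}(H)=d$ for the max-fractional pseudoweight, and the statement for all four pseudoweights follows from~(\ref{eq:inequality}). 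The only genuinely delicate step is the reduction to disjoint blocks; once it is in place, every block is a single-relation instance already handled by Lemmas~\ref{lem:w2} and~\ref{lem:w2+}.
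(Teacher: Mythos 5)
Your proof is correct, and while the parity-check matrix $H$ you build is exactly the one in the paper (bidiagonal weight-$2$ blocks collapsing the copies of each point, plus one relation row per $p_{k+j}$), your verification that $w_{\min}(H)=d$ takes a genuinely different route. The paper observes that, because the representing-sets are pairwise disjoint, the Tanner graph of $H$ is a disjoint union of trees, and then invokes the cycle-free result \cite[Lem.~28]{vk}: the fundamental polytope equals the code polytope, so there are no proper pseudocodewords and all four minimum pseudoweights equal $d$ in one stroke. You instead exploit the same disjointness combinatorially: $H$ is block-diagonal over the column groups $\{k+j\}\cup S_j$ and the free singletons, so $\mcK(H)$ factors, $\mcC$ is a direct sum with $d=\min_B d_B$, and each block is a single-relation instance covered by Lemma~\ref{lem:w2} or Lemma~\ref{lem:w2+}; summing the per-block max-fractional inequalities and dividing by $\max_i u_i$ gives $w_{\text{maxfrac}}(x)\ge d$, after which~(\ref{eq:inequality}) transfers the bound to the other three pseudoweights (your phrase ``pointwise smallest'' is really the fact underlying~(\ref{eq:inequality}) rather than its literal content, but the deduction $\rho_{\text{maxfrac}}(\mcC)\le n-k\Rightarrow\rho(\mcC)\le n-k$ is exactly how the paper itself uses that display in Theorem~\ref{thm:4}). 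The paper's appeal to cycle-freeness is shorter and yields all four pseudoweights directly; your block decomposition is more self-contained, stays within the tools already quoted in the paper, and has the advantage of being the same style of explicit cone estimate that is forced on one in the representing-dependent case of Theorem~\ref{thm:4}, where no cycle-free representation exists.
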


\begin{proof}
  Still arrange a generator matrix $G$ of $\mcC$ in such a way that
  the first $m(p_1)$ columns of $G$ are the point~$p_1$ and the last
  $m(p_{k+t})$ columns are the point~$p_{k+t}$.  Furthermore, up to
  code equivalence, we may assume that the representing-set of
  $p_{k+j}$ is $S_j = \{p_{s_{j-1}+1}, p_{s_{j-1}+2}, \dots, p_{s_j}\}$, $1\le
  j\le t$, where $0 = s_0 < s_1 < s_2 < \cdots < s_t \le k$.
  Construct a matrix $H$ as
  \begin{equation}\label{eq:ori}
    H = \begin{pmatrix}
      \,H'\, \\
      h_1 \\
      \vdots \\
      h_t
    \end{pmatrix},
  \end{equation}
  where
  \[ H'= \begin{pmatrix}
    H_1 \quad\quad\quad\quad\quad~~~~~~~~ \\
    \quad~ \ddots \quad\quad\quad\quad~~~~~~~ \\
    \quad\quad~~ H_k \quad\quad\quad~~~~~~ \\
    \quad\quad\quad~~~~ H_{k+1} \quad\quad~~~~ \\
    \quad\quad\quad\quad~~~~~~ \ddots \quad~~ \\
    \quad\quad\quad\quad\quad~~~~~~~~ H_{k+t}\, \\
  \end{pmatrix} \]
  is a block diagonal submatrix, and $H_i$ for $1\le i\le k+t$ is
  defined as in (\ref{eq:subm}), and $h_i$ for $1\le i\le t$ is a binary
  row vector, and we demand that the coordinate position of $h_j$,
  $1\le j\le t$, corresponding to the first column of $H_i$ for
  $s_{j-1}+1\le i\le s_j$ and to the first column of $H_{k+j}$ be
  equal to one.  Then, it can be checked that $HG^T=0$ and
  $\text{rank}(H)=n-k$, thus, $H$ is a parity-check matrix of $\mcC$.

  Consider the Tanner graph of the code $\mcC$ with respect to the
  parity-check matrix $H$.  It is easy to see that this Tanner graph
  is a disjoint union of trees, i.e., it does not have any cycles.
  From~\cite[Lem.~28]{vk} it follows that the fundamental polytope
  equals the code polytope.  Therefore, there do not exist any proper
  pseudocodewords.  Hence, it holds that $\rho(\mcC) = n-k$ for all
  the four pseudoweights.
\end{proof}

For the representing-dependent case, it is more complicated to
determine the pseudocodeword redundancy, as the codes will have in
general no cycle-free Tanner graph representation.  However, we may
get some results about some particular codes.

Assume that $p_1,\dots,p_k$ are basis points of $PG(k\!-\!1,2)$ and
that $p_{k+1},\dots,p_{k+t}$ are representing-dependent.  Denote by
$U_1$ the basis points which belong to one and only one representing-set
and denote by $U_2$ the basis points which belong to at least two
representing-sets, so that $U_2 = (\bigcup_{i=1}^t S_i)\setminus U_1$.
Define $U_3 = U_1\cup \{p_{k+1}, p_{k+2}, \dots, p_{k+t}\}$.

\begin{thm}\label{thm:4}
  Let the notations be defined as above and assume one of the
  following conditions holds:
  \begin{enumerate}
  \item $S_i\cap U_1\ne \varnothing$ for each $1\le i\le t$, and
    $\min\{m(p) \mid p\in U_2\} \ge \max\{m(p)\mid p\in U_3\}$,
  \item $W_1 = \{j\mid S_j\cap U_1 \ne \varnothing\} \ne \varnothing$ and
    $W_2 = \{j\mid S_j\cap U_1 = \varnothing\} \ne \varnothing$ and
    $\min\{m(p) \mid p\in U_2\} \ge \max\{m(p)\mid p\in U_3\}$ and
    $\min\{m(p_{k+j})\mid j\in W_1\} \le \min\{m(p_{k+j})\mid j\in W_2\}$,
  \item $|\bigcap_{i=1}^{t}S_i| \ge 2$ and $\max\{m(p)\mid p\in
    \bigcap_{i=1}^{t}S_i\} \le \min \{m(p)\mid p\in
    ((\bigcup_{i=1}^{t}S_i) \setminus (\bigcap_{i=1}^{t}S_i))\cup\{p_{k+1},
    \cdots, p_{k+t}\}\}$.
  \end{enumerate}
  Then, the $[n,k]$ code $\mcC$ determined by
  \[ m(p) = \begin{cases}
    z_i\ge 1~ &  \text{if }p=p_i,\ 1\le i\le k+t\,,\\
    0  & \text{otherwise},
  \end{cases} \]
  satisfies $\rho(\mcC)=n-k$ for all the four pseudoweights.
\end{thm}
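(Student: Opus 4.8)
The plan is to exhibit, under each of the three conditions, one parity-check matrix $H$ with exactly $n-k$ rows whose minimum max-fractional pseudoweight equals $d$. This gives $\rho_{\text{maxfrac}}(\mcC)\le n-k$, and since $\rho_{\text{maxfrac}}$ is the largest of the four pseudoredundancies by \eqref{eq:inequality} while $\rho(\mcC)\ge n-k$ always holds, all four pseudoredundancies then equal $n-k$. First I would build $H$ exactly as the matrix \eqref{eq:ori} used for Theorem~\ref{thm:3}: a block-diagonal part $H'$ whose blocks $H_i$ are the weight-$2$ chain matrices of \eqref{eq:subm} (one block per point $p_i$, $1\le i\le k+t$), together with $t$ relation rows $h_1,\dots,h_t$, where $h_j$ carries a $1$ in the first column of the block of each $p_i\in S_j$ and in the first column of the block of $p_{k+j}$. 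Since $p_{k+j}=\sum_{p_i\in S_j}p_i$ in $\GF(2)^k$, one checks $HG^{T}=0$ and $\rank(H)=n-k$, so $H$ is a valid parity-check matrix.

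By Lemma~\ref{lem:w2} the weight-$2$ blocks force every $x\in\mcK(H)$ to be constant on each block, so I may identify $x$ with point-values $x_1,\dots,x_{k+t}\ge 0$; then $w_{\text{maxfrac}}(x)=\bigl(\sum_i z_ix_i\bigr)/\max_i x_i$, and the only remaining fundamental-cone inequalities \eqref{eq:cone} are those of the relation rows, namely, for each $j$ and each $q\in\{p_{k+j}\}\cup S_j$,
\[ x_q\ \le\ \sum_{q'\in(\{p_{k+j}\}\cup S_j)\setminus\{q\}} x_{q'}\,. \]
After normalizing $\max_i x_i=1$, attained at a point $q^{*}$, the task reduces to proving $\sum_i z_i x_i\ge d$. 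The mechanism driving every case is that the cone inequality of a relation row through $q^{*}$ forces the sum of the remaining point-values in that row to be $\ge 1$; the value hypotheses then guarantee that these forced contributions, weighted by the $z_i$, already reach $d$.

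The core of the argument is a case analysis according to the location of $q^{*}$ --- a shared basis point (in $U_2$, resp.\ in $\bigcap_i S_i$), a unique basis point (in $U_1$), or a non-basis point $p_{k+j}$ --- carried out separately under conditions (1), (2), (3). In condition~(1), if $q^{*}\in U_2$ I would combine the inequalities $x_{p_{k+j}}\le\sum_{p_i\in S_j}x_i$ with $S_j\cap U_1\ne\varnothing$ to push value onto the unique points, while $\min_{U_2}m\ge\max_{U_3}m$ ensures a maximizer in $U_2$ is heavy enough on its own; the subcases $q^{*}\in U_1$ or $q^{*}=p_{k+j}$ follow from the single relation row through $q^{*}$. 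Conditions~(2) and~(3) are handled analogously, the extra hypotheses on $W_1,W_2$, resp.\ on $\bigcap_i S_i$, being exactly what balances the cases where $q^{*}$ sits in a representing-set with no unique point, resp.\ in the common intersection. In each subcase I expect $\sum_i z_i x_i\ge d$ to reduce to an elementary inequality of the type $z_2+\min(z_1,z_4)\ge z_1+z_4$, already visible in the smallest instance $p_4=p_1+p_2$, $p_5=p_2+p_3$, which follows from the assumed value ordering.

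The main obstacle is precisely this case analysis. Because the representing-sets overlap, the Tanner graph of $H$ contains cycles, so --- unlike in Theorem~\ref{thm:3} --- I cannot appeal to a cycle-free fundamental-polytope argument and must instead track by hand how the relation constraints propagate lower bounds away from the maximizing coordinate. The delicate point is that the three conditions describe genuinely different heavy/light configurations (shared points heavy in (1)--(2), common-intersection points light in (3)), so the minimum-weight codeword to compare against --- and hence the value of $d$ realized implicitly by the binding cone constraints --- changes from case to case; verifying that the stated value inequalities are exactly strong enough in every subcase is where the real work lies.
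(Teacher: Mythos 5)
Your proposal follows essentially the same route as the paper's proof: the identical parity-check matrix from \eqref{eq:ori} (weight-$2$ chain blocks plus the relation rows $h_j$), reduction via Lemma~\ref{lem:w2} to point-values, a lower bound on the max-fractional pseudoweight obtained by case-splitting on whether the maximizing coordinate lies in $U_2$ (resp.\ $\bigcap_i S_i$) or in $U_1\cup V\cup\{p_{k+1},\dots,p_{k+t}\}$ and exploiting the value-ordering hypotheses through the relation-row cone inequalities, and finally the transfer to the other three pseudoweights via \eqref{eq:inequality} together with $\rho(\mcC)\ge n-k$. The paper carries out the weighting step slightly more explicitly (multiplying the cone inequality by $m(p'')$ for a point $p''\in T_i\setminus U_2$ to get $\gamma_i\le\bigl(\sum_{p'\in T_i}m(p')x_{p'}\bigr)/x_p$, and identifying $d(\mcC)=\min\{\gamma,\delta\}$), but this is exactly the mechanism you describe, so your plan is on target.
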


\begin{proof}
  The proof is similar for the three cases.  We only give the proof
  for the first case.  Arrange a generator matrix $G$ of $\mcC$ as
  before, namely, put the points $p_i$ for $1\le i\le k+t$ in order in
  the columns of $G$, and each point $p_i$, $1\le i\le k+t$, repeats
  $m(p_i)$ times.

  Construct a matrix $H$ as in (\ref{eq:ori}), and the binary vector
  $h_j$ in $H$, $1\le j\le t$, is determined by the representing-set
  $S_j$ of the point $p_{k+j}$.  If $S_j = \{p_{i_1}, p_{i_2}, \dots,
  p_{i_{\theta}}\}$, then $h_j$ has a one in each coordinate position
  corresponding to the first column of $H_{k+j}$ and to the first
  column of $H_{i_l}$, $1\le \ell\le \theta$.  Then it can be checked
  that $HG^T=0$ and $\text{rank}(H) = \sum_{i=1}^{k+t}m(p_i)-k =
  n-k$. Thus, $H$ is a parity-check matrix of $\mcC$.

  For $1\le i\le t$, let $T_i=S_i\cup \{p_{k+i}\}$, and let $V = \{p_1,
  \dots, p_k\} \setminus (\bigcup_{i=1}^tS_i)$.  Define
  \[ \begin{split}
    \gamma_i &= \min\{m(p_{j_1})+m(p_{j_2})\mid
    p_{j_1}, p_{j_2}\in T_i\setminus U_2 ,\, j_1\ne j_2\} \\
    \gamma &= \min_{1\le i\le t}\{\gamma_i\} \\
    \delta &= \min\{m(p)\mid p\in V\} \,.
  \end{split} \]

  Different from the representing-independent case, the
  analysis of the codewords with minimum (Hamming) weight is tedious
  in the representing-dependent case.  In general, according
  to the construction of the parity-check matrix $H$, one may divide
  the possible codewords with minimum weight into two classes: one
  class is the codewords with nonzero coordinate in the position
  corresponding to some point in $U_2$, and the other class is the
  ones with zero coordinate in the position corresponding to any point
  in $U_2$ (note that $U_2 = \varnothing$ in the
  representing-independent case).  The confined condition
  $\min\{m(p) \mid p\in U_2\}\ge \max\{m(p)\mid p\in U_3\}$ plays a
  key role in determining the codewords with minimum weight.

  More concretely, since $\min\{m(p) \mid p\in U_2\}\ge \max\{m(p)\mid
  p\in U_3\}$, it follows that $d(\mcC)=\min\{\gamma, \delta\}$ by
  analyzing the constructed parity-check matrix $H$, that is, the
  codewords with minimum weight should be ones with zero coordinates
  in the positions corresponding to any point in $U_2$.

  On the other hand, for $x\in \mcK(H)$ and $p\in T_i$, $1\le
  i\le t$, we have
  \[ x_p \le \sum_{p'\in T_i\setminus\{p\}}x_{p'} \qquad
  \text{for all }p\in T_i \,, \]
  and thus by the assumption $\min\{m(p) \mid p\in
  U_2\}\geq \max\{m(p)\mid p\in U_3\}$, we get
  \begin{align*}
    m(p'') x_{p} &\le m(p'')
    \big( \! \sum_{p'\in T_i\setminus\{p\}} \! x_{p'} \big)
    \quad (\text{for some }p'' \in T_i\setminus U_2) \\
    & \le \! \sum_{p'\in T_i\setminus\{p\}} \! m(p') x_{p'} \,.
  \end{align*}
  Thus,
  \[ (m(p'')+m(p))x_p \,\le\, \sum_{p'\in T_i}m(p')x_{p'} \,, \]
  that is,
  \begin{equation}\label{eq:4}
    m(p'')+m(p) \,\le\, \frac{\sum_{p'\in T_i}m(p')x_{p'}}{x_p} \,.
  \end{equation}

  Since $\min\{m(p) \mid p\in U_2\}\ge \max\{m(p)\mid p\in U_3\}$,
  $\gamma_i\le m(p'')+m(p)$ always holds no matter $p\in (T_i\setminus
  U_2)$ or $p\in (T_i\cap U_2)$, $1\le i\le t$.  Thus, (\ref{eq:4})
  can be rewritten as
  \begin{equation}\label{eq:5}
    \gamma_i \,\le\, m(p'')+m(p) \,\le\,
    \frac{\sum_{p'\in T_i}m(p')x_{p'}}{x_p} \,.
  \end{equation}

  Therefore,
  \[ \begin{split}
    w_{\text{maxfrac}}(x) &\,=\, \frac{\sum_{j=1}^n x_j}{\max_{j}\{x_j\}} \\
    &\,=\, \frac{\sum_p m(p)x_p}{\max\{x_p\}} \,, \quad
    p\in \textstyle\bigcup\limits_{i=1}^tT_i \,\cup V \\
    &\,\ge\, \begin{cases}
      \gamma_i~ & \text{if }\max\{x_p\} = x_{p_0} \text{ and } p_0\in T_i,\
      1\le i\le t~\ (\text{by~\ref{eq:5}}), \\
      \delta & \text{if }\max\{x_p\} = x_{p_0} \text{ and }  p_0\in V \,.
    \end{cases}
  \end{split} \]
  Thus, $w_{\min}(H)\geq d(\mcC)=\min\{\gamma, \delta\}$ for the
  max-fractional pseudoweight.  It follows that  $\rho_{\text{maxfrac}}(\mcC)\leq
  n-k$.   We thus  have  $\rho(\mcC)\leq n-k$ for
all the four pseudoweights by  (\ref{eq:inequality}).   Since
$\rho(\mcC)\geq n-k$  for all the four pseudoweights   by Definition
\ref{def:pr},  we  have $\rho(\mcC)= n-k$ for all the four
pseudoweights.

\end{proof}

\begin{exa}\label{example}
  Let $k=6$ and $t=3$; consider the linear code~$\mcC$ generated by
  the matrix \[ G =
  \begin{pmatrix}
    ~1 & 0 & 0 & 0 & 0 & 0 & 1 & 0 & 0~ \\
    ~0 & 1 & 0 & 0 & 0 & 0 & 0 & 1 & 0~ \\
    ~0 & 0 & 1 & 0 & 0 & 0 & 0 & 0 & 1~ \\
    ~0 & 0 & 0 & 1 & 0 & 0 & 1 & 0 & 1~ \\
    ~0 & 0 & 0 & 0 & 1 & 0 & 1 & 1 & 0~ \\
    ~0 & 0 & 0 & 0 & 0 & 1 & 0 & 1 & 1~ \\
  \end{pmatrix}. \]

  Denote by $i$-th column of the matrix $G$ by $p_i$, $1\le i\le 9$.
  Then, $p_i$ for $1\le i\le 6$ are basis points of $PG(5, 2)$, and
  the points $p_7$, $p_8$, and $p_9$ have representing-sets $S_1 =
  \{p_1,p_4,p_5\}$, $S_2 = \{p_2,p_5,p_6\}$, and $S_3 =
  \{p_3,p_4,p_6\}$, respectively.  One sees that $U_1 = \{p_1, p_2,
  p_3\}$, $U_2 = (S_1\cup S_2\cup S_3) \setminus U_1 =
  \{p_4,p_5,p_6\}$, and $U_3 = U_1\cup\{p_7,p_8,p_9\} =
  \{p_1,p_2,p_3,p_7,p_8,p_9\}$.  Furthermore, $S_1\cap S_2\cap S_3 =
  \varnothing$ and $S_i\cap U_1 \ne \varnothing$ for each $1\le i\le 3$
  and $m(p_i)=1$ for $1\le i\le 9$.  Thus, the points $p_i$, $1\leq
  i\le 9$, satisfy and only satisfy the conditions of Case 1) in
  Theorem~\ref{thm:4}, and therefore, $\rho(\mcC) = n-k = 9-6 = 3$.

  We remark that the code $\mcC$ is not {\em cycle-free}, i.e., the
  Tanner graph of any parity-check matrix of $\mcC$ has a cycle, as we
  will now demonstrate.  According to the proof of
  Theorem~\ref{thm:4}, one may take a parity-check matrix $H$ of
  $\mcC$ as
  \[ H =\begin{pmatrix}
    \,1\ 0\ 0\ 1\ 1\ 0\ 1\ 0\ 0\, \\
    \,0\ 1\ 0\ 0\ 1\ 1\ 0\ 1\ 0\, \\
    \,0\ 0\ 1\ 1\ 0\ 1\ 0\ 0\ 1\, \\
  \end{pmatrix}. \]
  Obviously, the three rows of this $H$ are independent and $H$ has a
  cycle located at the 3-th, 4-th, and 5-th coordinates.

  Due to the fact that any parity-check matrix with 3 independent rows
  may be written as $PH$, where $P$ is a $3\times 3$ invertible
  matrix, it suffices to check that any matrix $PH$ has a cycle for
  any invertible $3\times 3$ matrix $P$.  In fact, one may list all
  binary invertible $3\times 3$ matrices $P$, and then check that $PH$
  has a cycle for each such $P$.  A simpler argument is to make use of
  the form of the matrix $H$. Observe that $H$ may be written as $(I,
  M, I)$, where $I$ is the $3\times 3$ identity matrix, and
  \[ M = \begin{pmatrix}
    \,1\ 1\ 0\, \\
    \,0\ 1\ 1\, \\
    \,1\ 0\ 1\, \\
  \end{pmatrix}. \]
  Thus, $PH=(P,PM,P)$.  Since $P$ is a binary $3\times 3$ invertible
  matrix, the number of ones in~$P$, denoted by $\mcN(P)$, should
  satisfy $\mcN(P)\ge 3$.  If $\mcN(P)\ge 4$, then there exists a
  column in~$P$ such that the number of ones in the column is at least
  two.  Since such a column will occur both in the first block $P$ and
  in the last block $P$ in the matrix $PH=(P,PM,P)$, the cycle can be
  found in these two same columns.  The remaining case is $\mcN(P)=3$,
  and in this case, the block $PM$ in $PH=(P,PM,P)$ is just the
  permutations of the rows of $M$, thus, the block $PM$ contains a
  cycle since $M$ contains a cycle.  These arguments show that $\mcC$
  is not cycle-free.
\end{exa}

\begin{exa}
  Let $k=4$ and $t=3$; consider the code $\mcC$ generated by the
  matrix \[ G =
  \begin{pmatrix}
    \;1\ 0\ 0\ 0\ 1\ 0\ 0\; \\
    \;0\ 1\ 0\ 0\ 1\ 0\ 1\; \\
    \;0\ 0\ 1\ 0\ 0\ 1\ 1\; \\
    \;0\ 0\ 0\ 1\ 1\ 1\ 0\; \\
  \end{pmatrix}. \]

  It can be checked that the first four columns of $G$,
  $p_1,\dots,p_4$, are basis points in $PG(3,2)$, and the last three
  columns, $p_5$, $p_6$, and $p_7$, have representing-sets $S_1 =
  \{p_1,p_2,p_4\}$, $S_2 = \{p_3,p_4\}$, and $S_3 = \{p_2,p_3\}$,
  respectively.  In addition, $U_1 = \{p_1\}$, $U_2 =
  \{p_2,p_3,p_4\}$, and $U_3 = \{p_1, p_5, p_6, p_7\}$.  Furthermore,
  $S_1\cap S_2\cap S_3 = \varnothing$, $S_1\cap U_1 = \{p_1\}$,
  $S_2\cap U_1 = S_3\cap U_1 = \varnothing$, and $m(p_i)=1$ for $1\le
  i\le 7$.  Thus, the points $p_i$, $1\le i\le 7$, satisfy and only
  satisfy the conditions of Case~2) in Theorem~\ref{thm:4}, and
  therefore $\rho(\mcC) = n-k = 7-4 = 3$.

  Similarly to Example~\ref{example}, one may show that $\mcC$ is not
  cycle-free by taking a parity-check matrix $H$ of $\mcC$ as
  \[ H = \begin{pmatrix}
    \,1\ 1\ 0\ 1\ 1\ 0\ 0\, \\
    \,0\ 0\ 1\ 1\ 0\ 1\ 0\, \\
    \,0\ 1\ 1\ 0\ 0\ 0\ 1\, \\
  \end{pmatrix}. \]
\end{exa}

\begin{exa}
  Let $k=4$ and $t=3$; consider the code $\mcC$ generated by the
  matrix \[ G =
  \begin{pmatrix}
    \;1\ 0\ 0\ 0\ 1\ 1\ 1\; \\
    \;0\ 1\ 0\ 0\ 1\ 1\ 1\; \\
    \;0\ 0\ 1\ 0\ 0\ 1\ 1\; \\
    \;0\ 0\ 0\ 1\ 1\ 1\ 0\; \\
  \end{pmatrix}. \]

  Then, the first four columns of $G$, $p_1,\dots,p_4$, are
  the basis points of $PG(3,2)$, and the last three columns of $G$,
  $p_5$, $p_6$, and $p_7$, have representing-sets
  $S_1 = \{p_1,p_2,p_4\}$, $S_2 = \{p_1,p_2,p_3,p_4\}$, and
  $S_3 = \{p_1,p_2,p_3\}$, respectively.  In addition,
  $U_1 = \varnothing$, $U_2 = \{p_1,p_2,p_3,p_4\}$, $U_3 = \{p_5,p_6,p_7\}$,
  $S_1\cap S_2\cap S_3=\{p_1,p_2\}$, and $m(p_i)=1$ for $1\le i\le 7$.
  Thus, the points $p_i$ for $1\le i\le 7$ satisfy and only
  satisfy the conditions of Case 3) in Theorem~\ref{thm:4},
  so that $\rho(\mcC) = n-k = 7-4 = 3$.

  Similarly to the above two examples, one may show that $\mcC$ is not
  cycle-free by taking a parity-check matrix $H$ of $\mcC$ as
  \[ H = \begin{pmatrix}
    \,1\ 1\ 0\ 1\ 1\ 0\ 0\, \\
    \,1\ 1\ 1\ 1\ 0\ 1\ 0\, \\
    \,1\ 1\ 1\ 0\ 0\ 0\ 1\, \\
  \end{pmatrix}. \]
\end{exa}

\begin{rem}
  Along the line of Theorem~\ref{thm:4}, we may use the value
  assignment to get other kinds of codes whose pseudoredundancies can
  be determined, however, the conditions will be too tedious to get
  more information.  So, we omit them.
\end{rem}

Summing up Theorems~\ref{thm:3} and~\ref{thm:4} and using similar
arguments as in these two theorems, we get in general the following
(the detailed proof is omitted):

\begin{thm}
  Let the notations be as before.  Assume that the points $p_{k+i}$,
  $1\le i\le t$, can be divided into $\ell$ subsets such that:
  \begin{enumerate}
  \item the representing-sets of points from different classes do not
    intersect;
  \item each of these $\ell$ classes is either an
    representing-independent one or an representing-dependent one
    satisfying the conditions of Theorem~\ref{thm:4}.
  \end{enumerate}
  Then, the $[n,k]$ code $\mcC$ determined by
  \[ m(p) = \begin{cases}
    z_i\ge 1~ &  \text{if }p=p_i,\ 1\le i\le k+t \,,\\
    0 & \text{otherwise},
  \end{cases} \]
  satisfies $\rho(\mcC) = n-k$ for all the four pseudoweights.
\end{thm}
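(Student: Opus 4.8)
The plan is to exploit condition~(1) --- that the representing-sets of points from different classes are pairwise disjoint --- to build a parity-check matrix that is \emph{block diagonal} across the $\ell$ classes, and then to reduce the pseudoweight analysis to the per-class results already established in Theorems~\ref{thm:3} and~\ref{thm:4}. First I would arrange a generator matrix $G$ by listing the points $p_i$, $1\le i\le k+t$, in an order that groups together, for each class $C_a$, its additional points $\{p_{k+j}\mid j\in J_a\}$ along with the basis points occurring in their representing-sets; write $B_a=\bigcup_{j\in J_a}S_j$ for this set of basis points, and let $V$ denote the basis points lying in no representing-set at all. Since condition~(1) makes the $B_a$ pairwise disjoint, I can assemble $H$ from per-class blocks built exactly as in the proof of Theorem~\ref{thm:3} (for a representing-independent class) or Theorem~\ref{thm:4} (for a representing-dependent one), together with a weight-$2$ identity-chain block of the form~(\ref{eq:subm}) for each free point of $V$. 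Because the coupling rows $h_j$ of one class touch only columns belonging to that class, the matrix $H$ is genuinely block diagonal, and the verifications $HG^T=0$ and $\rank(H)=n-k$ then follow blockwise, so $H$ is a parity-check matrix of $\mcC$.

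Next I would record two structural consequences of block-diagonality. On the cone side, $\mcK(H)$ factors as the product of the per-block fundamental cones, so that for every $x\in\mcK(H)$ each restriction $x|_{\mathcal B}$ to a block $\mathcal B$ lies in that block's fundamental cone. On the distance side, since distinct groups use disjoint basis points, the Hamming weight $w_H(uG)$ of a codeword $c=uG$ splits as a sum of nonnegative contributions, one per block, each depending only on the restriction of $u$ to that block's basis points; moreover any nonzero such restriction forces a strictly positive contribution, as every $m(p)\ge 1$. Minimising over $u\ne\mathbf 0$ therefore localises a minimum-weight codeword to a single block, yielding $d(\mcC)=\min\{\delta,\min_a d_a\}$, where $\delta=\min_{p\in V}m(p)$ and $d_a$ is precisely the per-class minimum distance computed in Theorem~\ref{thm:3} or~\ref{thm:4} for the standalone code carried by class $C_a$ (which, viewed in isolation, has no free basis points).

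The final step is the pseudoweight bound. For $x\in\mcK(H)$ let the maximum coordinate be attained at a point $p_0$, lying in some block $\mathcal B$. Dropping the nonnegative contributions of all other blocks from the numerator of $w_{\text{maxfrac}}(x)=(\sum_p m(p)x_p)/\max_p x_p$, and noting that $p_0$ also attains the maximum within $\mathcal B$, I obtain $w_{\text{maxfrac}}(x)\ge w_{\text{maxfrac}}(x|_{\mathcal B})\ge w_{\min}(H_{\mathcal B})=d_{\mathcal B}\ge d(\mcC)$, where the middle equality is supplied by Theorem~\ref{thm:3} in the cycle-free case, by Theorem~\ref{thm:4} in the representing-dependent case, or by Lemma~\ref{lem:w2} for a free point of $V$. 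Hence $w_{\min}(H)\ge d(\mcC)$ for the max-fractional pseudoweight, so $\rho_{\text{maxfrac}}(\mcC)\le n-k$; by~(\ref{eq:inequality}) the same bound then holds for all four pseudoweights, and combining with the trivial inequality $\rho(\mcC)\ge n-k$ gives $\rho(\mcC)=n-k$, as claimed.

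I expect the main obstacle to be the distance decomposition of the second paragraph: one must verify carefully that disjointness of the representing-sets really does render the block contributions to $w_H(uG)$ mutually independent, so that no minimum-weight codeword can ``straddle'' two classes and thereby undercut the value $\min\{\delta,\min_a d_a\}$. Once that independence is secured, the block-diagonal structure makes the rest of the argument a faithful per-block replay of Theorems~\ref{thm:3} and~\ref{thm:4}, which is why the detailed proof can safely be omitted.
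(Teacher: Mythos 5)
Your proposal is correct and follows exactly the route the paper intends: the paper omits the detailed proof but states that the theorem follows by ``summing up'' Theorems~\ref{thm:3} and~\ref{thm:4}, which is precisely your block-diagonal parity-check construction (one block per class, plus weight-$2$ chains for the free basis points), the resulting factorisation of the fundamental cone, the per-block localisation of minimum-weight codewords via the disjointness of the sets $B_a$, and the reduction of the max-fractional bound to the per-class results followed by~(\ref{eq:inequality}). No discrepancy with the paper's argument.
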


%---------------------------------------

\section{Conclusion}\label{sec:concl}

Making use of the value assignment, we derived upper bounds on the
pseudoredundancies for certain binary codes with repeated and added
coordinates and for certain shortened subcodes.  Also, we constructed
several kinds of $k$-dimensional binary linear codes by using the
value assignment; the pseudoredundancies for all of
the four pseudoweights of these binary linear codes are fully determined.

%---------------------------------------

%\section*{Acknowledgement}

%The authors would like to thank the anonymous referees and Professor
%for valuable remarks and helpful suggestions.

%---------------------------------------


\begin{thebibliography}{cc} \itemsep0mm

\bibitem{fwk} J.~Feldman, M.~J.~Wainwright, and D.~R.~Karger,
  ``Using linear programming to decode binary linear codes,''
  {\em IEEE Trans.\ Inf.\ Theory,} vol.~51, no.~3, pp.~954--972, 2005.

\bibitem{vk} P.~O.~Vontobel  and R.~Koetter,  ``Graph-cover
  decoding and finite-length analysis of message-passing iterative
  decoding of LDPC codes,'' CoRR Dec. 2005 [Online] Available:
  http://arxiv.org/abs/cs/0512078

\bibitem{fkkr} G.~D.~Forney, R.~Koetter, F.~R.~Kschischang, and A.~Reznik,
  ``On the effective weights of pseudocodewords for codes defined
    on graphs with cycles,''
  in {\em Codes, Systems, and Graphical Models},
  IMA workshop, Minneapolis, USA, Aug. 1999.
  New York, USA: Springer, 2001, pp. 101--112.

\bibitem{sv} M.~Schwartz and A.~Vardy,
  ``On the stopping distance and the stopping redundancy of codes,''
  {\em IEEE Trans.\ Inf.\ Theory}, vol.~52, no.~3, pp.~922--932, 2006.

\bibitem{ks} C.~Kelley and D.~Sridhara,
  ``On the pseudocodeword weight and parity-check matrix redundancy of
  linear codes,''
  in {\em Proc. of the 2007 IEEE Information Theory Workshop (ITW)},
  Lake Tahoe, USA, Sep.~2007,   pp. 1-6.

\bibitem{zsf}  J.~Zumbr\"agel,  V.~Skachek  and  M.~Flanagan,  ``On
  the pseudocodeword redundancy of binary linear codes,'' {\em IEEE
    Trans.\ Inf.\ Theory}, vol.~58, no.~7, pp.~4848--4861, 2012.

\bibitem{ck} W.~D.~Chen and T.~Kl\o ve, ``The weight hierarchies of
  q-ary codes of dimension~4,'' {\em IEEE Trans.\ Inf.\ Theory,}
  vol.~42, no.~6, pp.~2265--2272, 1996.

\bibitem{lc}  Z.~H.~Liu and W.~D.~Chen,  ``Notes on the value
  function,''  {\em Des.\ Codes Cryptogr.},  vol.~54,
  no.~1, pp.~11--19, 2010.

\bibitem{lw} Z.~H.~Liu and X.-W.~Wu, ``On relative constant-weight codes,''
  {\em Des.\ Codes Cryptogr.}, 18 pages, Nov.~2013,
  to be published. [Online]. Available:
  http://dx.doi.org/10.1007/s10623-013-9896-2

\bibitem{etv} T.~Etzion, A.~Trachtenberg, and A.~Vardy,
  ``Which codes have cycle-free Tanner graphs?,''
  {\em IEEE Trans.\ Inf.\ Theory}, vol.~45, no.~6, pp.~2173--2181, 1999.

\bibitem{lzgw} Z.~H.~Liu, J.~Zumbr\"{a}gel, M.~Greferath, and X.-W. Wu,  ``Notes on the Pseudoredundancy,"   in {\em Proc. of the 2014 IEEE
International Symposium on Information Theory (ISIT 2014)}, Honolulu, USA, 29 June - 4 July, 2014, pp.2789-2793.

\end{thebibliography}
\end{document}